\title{Optimising expectation with guarantees for window mean payoff in Markov decision processes}
\titlerunning{Optimising expectation with guarantees for window mean payoff} 
\author{Pranshu Gaba}{Tata Institute of Fundamental Research, Mumbai, India}{pranshu.gaba@tifr.res.in}{https://orcid.org/0009-0000-8012-780X}{}
\author{Shibashis Guha}{Tata Institute of Fundamental Research, Mumbai, India}{shibashis.guha@tifr.res.in}{https://orcid.org/0000-0002-9814-6651}{}
\authorrunning{P. Gaba and S. Guha}
\keywords{Mean payoff, reactive synthesis, beyond worst-case, Markov decision processes, two-player games} 
\begin{document}

\maketitle

\begin{abstract}
    The window mean-payoff objective strengthens the classical mean-payoff objective by computing the mean-payoff over a finite window that slides along an infinite path. Two variants have been considered: in one variant, the maximum window length is fixed and given, while in the other, it is not fixed but is required to be bounded. In this paper, we look at the problem of synthesising strategies in Markov decision processes that maximise the window mean-payoff value in expectation, while also simultaneously guaranteeing that the value is above a certain threshold. We solve the synthesis problem for three different kinds of guarantees: sure (that needs to be satisfied in the worst-case, that is, for an adversarial environment), almost-sure (that needs to be satisfied with probability one), and probabilistic (that needs to be satisfied with at least some given probability $p$).
    
    We show that for fixed window mean-payoff objective, all the three problems are in $\mathsf{PTIME}$, while for bounded window mean-payoff objective, they are in $\mathsf{NP} \cap \mathsf{coNP}$, and thus have the same complexity as for maximising the expected performance without any guarantee. Moreover, we show that pure finite-memory strategies suffice for maximising the expectation with sure and almost-sure guarantees, whereas, for maximising expectation with a probabilistic guarantee, randomised strategies are necessary in general.
\end{abstract}

\section{Introduction}
\paragraph*{Beyond worst-case synthesis.}
Classical two-player quantitative zero-sum games~\cite{AG11,GTW02} involve decision making against a purely antagonistic environment, where a minimum performance needs to be guaranteed even in the worst case.
On the other hand, Markov decision processes (MDPs)~\cite{Puterman94} model uncertainty, and decision making involves ensuring a higher expected performance against a stochastic environment which usually does not provide any guarantee on the worst-case performance.
Both these models have their own weaknesses.
A strategy against an adversarial environment may provide worst-case guarantee but may be suboptimal in its expected behaviour.\
On the other hand, a strategy that maximises the expected performance may fail miserably in the worst-case situation.

However, in practice, both might be desired simultaneously: A system needs to provide guarantee in the worst-case, and perform well in an expected sense against a stochastic environment.
In~\cite{BFRR17}, the beyond worst-case (BWC) framework was introduced to provide strict worst-case guarantee as well as good expected performance.
In particular, the study was made for two quantitative objectives: mean payoff and shortest path.
While this work focussed on the restricted class of finite memory strategies, it was also shown that infinite memory strategies are strictly more powerful than finite-memory strategies in the BWC setting~\cite{BFRR17}.
The synthesis of infinite memory strategies was subsequently studied in~\cite{CR15}.

\paragraph*{Window mean payoff.}
For Boolean and quantitative prefix-independent objectives specified as the limit of a reward function in the long run~\cite{EM79, ZP96}, a play may satisfy such an objective and yet also exhibit undesired behaviours for arbitrarily long intervals~\cite{CHH09,CDRR15}. 
For instance, consider the following infinite sequence of payoffs: \((-1)\) \((+1)\) \((-1) (-1)\) \((+1)(+1)\) \((-1) (-1) (-1)\) \( (+1) (+1) (+1) \ldots\), where we see the  \((-1)\) payoff \(n\) times and then \((+1)\) payoff \(n\) times as \(n\) goes from \(1\) to \(\infty\). The classical (liminf) mean-payoff value of this sequence is zero, but there are arbitrarily long infixes in this sequence with mean-payoff \(-1\) (less than zero).
Finitary or window objectives strengthen such prefix-independent objectives by restricting the undesired behaviour to intervals  of bounded length (windows) of the play.
As a particular case, we consider the \emph{window mean-payoff objectives}, which are finitary versions of the classical mean-payoff objective. 
Window mean-payoff objectives~\cite{CDRR15} are quantitative finitary objectives
that strengthen the classical mean-payoff objective: the satisfaction of a window mean-payoff objective implies the satisfaction of the classical mean-payoff objective.
Given a length \(\WindowLength \geq 1\), the fixed window mean-payoff objective (\(\FWMP(\WindowLength, \Threshold)\)) is satisfied if except for a finite prefix, from every point in the play, there exists a window of length at most \(\WindowLength\) starting from that point such that the mean payoff of the window is at least a given threshold \(\Threshold\).
In the bounded window mean-payoff objective (\(\BWMP(\Threshold)\)), it is sufficient that there exists some length \(\WindowLength\) for which the \(\FWMP(\WindowLength, \Threshold)\) objective is satisfied.
The value of an outcome run is the largest~(supremum)~$\gamma$ such that from some point on, the run can be decomposed into windows of size smaller than the fixed or existentially quantified bound~$\WindowLength$, all having a mean-payoff value at least~$\gamma$.

\paragraph*{Contributions.}
In this paper, we study three different problems to maximise expectation while simultaneously providing guarantees for the fixed and the bounded window mean-payoff objectives.
Given an MDP and two threshold values \(\GuaranteeThreshold, \ExpectationThreshold \in \Rationals\), synthesise a strategy that
\begin{enumerate}
    \item (Beyond worst case (\(\BWC\)) synthesis) (i) ensures a window mean-payoff at least \(\GuaranteeThreshold\) surely, i.e. against all strategies of an adversarial environment, and (ii) an expectation that is at least \(\ExpectationThreshold\) against a stochastic model of the environment.
    \item (Beyond probability threshold (\(\BP\)) synthesis) (i) ensures a window mean-payoff larger than \(\GuaranteeThreshold\) with at least some given probability \(\ProbabilityGuaranteeThreshold\), and (ii) an expectation that is at least \(\ExpectationThreshold\) against a stochastic model of the environment.
    \item (Beyond almost sure (\(\BAS\)) synthesis) (i) ensures a window mean-payoff at  at least \(\GuaranteeThreshold\) almost surely, i.e. with probability one, and (ii) an expectation that is at least \(\ExpectationThreshold\) against a stochastic model of the environment.
\end{enumerate}
\emph{Motivating examples.}
We consider some motivating examples in the context of window mean payoff for maximising expectation while providing guarantees at the same time.
\begin{itemize}
    \item \emph{Consistent output of power plant.} 
    A power plant may be required to output 10 MW of power on average. 
    A plant that outputs 240 MW of power for an hour and 0 MW for the rest of the day satisfies the requirement but is not desirable if there is no means to store the energy. 
    Using windows, we can ask for an output of 10 MW every hour.
    Moreover, a power plant may have multiple ways to generate power (solar, wind, hydro) with different rates and reliability, e.g. solar output is high during the day and low during the night, while hydro is consistently low-moderate. 
    It is may be desired to devise a strategy that maximises the output in expectation while producing sufficient power at all times for critical applications such as hospitals and trains. 
    \item \emph{Investment in stock market.}
    While investing in the stock market, an investor not only wants a higher expected return, but may as well prefer to be risk-averse, that is, the stocks do not crash or such events happen rarely.
    Further, the investor may consistently want to receive returns over a certain period or a window of time.
    \item \emph{Gambling}~\cite{CENR18}. 
    In gambling too, while the goal is to maximise the expected profit, a desirable policy may want to avoid risk, and thus would try to ensure that the chances of losing is less than a certain probability.
    Further, a gambler would like to receive a profit amount from time to time and the payment should not be deferred indefinitely.
\end{itemize}
Thus the problems considered in this paper are important ones.

We show that, for the case of fixed window mean-payoff, all of the above problems are in \(\PTime\) (Theorems~\ref{thm:fwmp-sure-result}, \ref{thm:fwmp-probability-result}, and \ref{thm:fwmp-almost-sure-result}) and are thus no more complex than solving two-player games~\cite{CDRR15} or maximising expectation for the same objective in an MDP~\cite{BGR19}.
For classical mean-payoff objective, we note that the first problem above is in \(\NP \cap \coNP\)~\cite{BRR17} while the second and the third problems are in \(\PTime\)~\cite{CR15}.
For the case of bounded window mean-payoff objective, all the above problems are in \(\NP \cap \coNP\) (\Cref{thm:BWMP}), thus showing that the results are no more complex than solving two-player games or maximising expectation for the same objective in an MDP~\cite{CDRR15,BGR19}.

Our techniques are different from the BWC synthesis for classical mean-payoff objective.
An \emph{end-component} (EC) \(\EC\) is a strongly connected component of an MDP such that all outgoing edges of each probabilistic vertex in \(\EC\) also lead to a vertex in \(\EC\).
A maximal end-component (MEC) is an EC which is not included in any other end-component.
A winning end-component (WEC) is an EC such that from every vertex there exists a winning strategy for the sure objective while staying inside the end-component.
A maximal winning end-component is a winning end-component that is not included in any other winning end-component.
For sure-expectation problem for classical mean payoff~\cite{BFRR17}, maximal winning end-components (MWEC) are constructed. 
Within each MWEC, there are two strategies, one for the sure mean-payoff objective and the other for the expected mean-payoff objective and it may be required that the strategies alternate to meet both the sure and the expected thresholds.
In the current work, for window mean-payoff objective, for the sure-expected problem, we do not need to find the MWECs but instead it suffices to compute the maximum sure window mean payoff value from each vertex.
Further, unlike mean-payoff objective~\cite{BFRR17}, for window mean-payoff objective infinite memory strategies are no more powerful than finite memory strategies and we need to switch from a strategy corresponding to the expectation maximisation to the sure satisfaction strategy only once.

\paragraph*{Related Work.}
The beyond worst-case framework was introduced in~\cite{BFRR17} for quantitative objectives.
The problem was studied for finite-memory strategies and it was shown to be in \(\NP \cap \coNP\) for mean-payoff objective.
The case of infinite memory strategy for the \(\BWC\) synthesis problem was left open in~\cite{BFRR17} and was solved in~\cite{CR15}.
Further, in~\cite{CR15}, a natural relaxation of the \(\BWC\) problem, the beyond almost-sure synthesis problem (\(\BAS\)) was introduced and was shown to be in \(\PTime\) for mean-payoff objective.
The beyond probability threshold synthesis problem was studied for mean-payoff objective in~\cite{CKK17}.
In~\cite{BRR17}, the \(\BWC\), \(\BAS\), and \(\BP\) synthesis problems were studied for qualitative omega-regular objectives encoded as parity objectives.
The problems were shown to be in \(\NP \cap \coNP\).
In~\cite{CP19}, the above problem was studied in the context of stochastic games which are a generalisation of MDPs where the environment is both stochastic and adversarial.
A combination of optimising expected mean payoff and surely satisfying omega-regular~\cite{AKV16}, safety~\cite{GGR18}, and energy objectives~\cite{BKN16} were also considered.
In~\cite{BGR20}, Boolean combinations of objectives that are omega-regular properties that need to be enforced either surely, almost surely, existentially, or with non-zero probability were studied.
It was shown that both randomisation and infinite memory may be required by an optimal strategy.
In~\cite{BKW24}, a combination of parity objective and multiple reachability objectives along with threshold probabilities were considered where the parity objective needs to be satisfied surely and each reachability objective is satisfied with the corresponding threshold probability.
The \(\BWC\) and the \(\BP\) problems were also studied for the discounted-sum objective in partially observed MDPs (POMDPs)~\cite{CNPRZ17, CENR18}.

Some of the above works used different kinds of end-components in their reasoning.
While~\cite{BFRR17} used MWECs, in~\cite{CR15} and \cite{CKK17} MECs are considered, and in~\cite{AKV16} and ~\cite{BRR17} sophisticated kind of end-components called super-good end-components and ultra good end-components were used.

Mean-payoff objectives were studied initially in two-player games, without stochasticity~\cite{EM79,ZP96}, and finitary versions were introduced as window mean-payoff objectives~\cite{CDRR15}.
For finitary mean-payoff objectives, the satisfaction problem~\cite{BDOR20} and the expectation problem~\cite{BGR19} were studied in Markov decision processes (MDPs).
The technique in both papers relies on first identifying end components of the given MDP.
While in~\cite{BDOR20}, the probability to reach the `good' end-components is maximised, in~\cite{BGR19}, the end components are collapsed, and in the modified MDP, expected mean payoff is maximised.
Both the expectation problem~\cite{BGR19} and the satisfaction problem~\cite{BDOR20} for the \(\FWMPL\) objective are in $\PTime$, while they are in \(\UP \cap \coUP\) for the \(\BWMP\) objective.
The satisfaction problem for window mean-payoff objectives has been studied recently in~\cite{DGG24} for stochastic games.
There, while satisfaction with positive probability and almost-sure satisfaction, that is, with probability \(1\), of \(\FWMPL\) are in $\PTime$, the problem is in \(\UP \cap \coUP\) for quantitative satisfaction\footnote{In~\cite{DGG24}, quantitative satisfaction is shown to be in \(\NP \intersection \coNP\). 
The probability of satisfaction from each vertex is unique and is polynomial in the size of the input, giving \(\UP \intersection \coUP\) membership.
}, i.e., with threshold probabilities $p \not\in \{0,1\}$.
Furthermore, the satisfaction problem of \(\BWMP\) is in \(\UP \cap \coUP\) and thus has the same complexity as that of the special case of MDPs.

In the current work too, we analyse MECs but in a way that is different from the above works.
While pure finite-memory strategies suffice for the sure-expectation and the almost sure-expectation problems for the window mean-payoff objectives, for the satisfaction-expectation problem, we need finite-memory randomised strategies.

\paragraph*{Organisation.}
In \Cref{sec:preliminaries}, we define the technical preliminaries.
In \Cref{sec:bwc-bas-bpt-definitions}, we formally introduce the three problems related to expectation maximisation with guarantees.
In \Cref{sec:fwmp} and in \Cref{sec:bwmp}, we study the problems for the fixed and bounded window mean-payoff objectives respectively.
We conclude in \Cref{sec:conc}.

\section{Preliminaries}%
\label{sec:preliminaries}
In this section, we define some necessary prerequisites.

\paragraph*{Probability distributions.}
For a finite set \(A\), a \emph{probability distribution} over \(A\) is a function \(\Prob \colon A \to [0,1]\) such that \(\sum_{a \in A} \Prob(a) = 1\).
We denote by \(\DistributionSet{A}\) the set of all probability distributions over \(A\).
The support of the probability distribution \(\Prob\) on \(A\) is \(\Support{\Prob} = \{a \in A \mid \Prob(a) > 0\}\).
The distribution \(\Prob\) is \emph{Dirac} if \(|\Support{\Prob}| = 1\).
For each \(a \in A\), we define a Dirac distribution \(\mathbf{1}_{a}(a')\) which equals \(1\) if \(a'=a\) and \(0\) otherwise.
For algorithm and complexity reasons, we assume that the probability distributions take rational values.

\paragraph*{Markov decision processes (MDPs).}
A \emph{Markov decision process} (MDP) is a tuple \(\MDP = ((\Vertices, \Edges), (\VerticesPlayer, \VerticesRandom), \ProbabilityFunction, \PayoffFunction)\) where:
\begin{itemize}
    \item \((V,E)\) is called the \emph{arena} of \(\MDP\). It is a directed graph with a set \(V\) of vertices and a set \(E \subseteq (\VerticesPlayer \times \VerticesRandom) \cup (\VerticesRandom \times \VerticesPlayer)\) 
    of edges such that for each vertex \(v \in V\), there is an out-edge from \(v\) in the game (i.e., no deadlocks).
    We denote by \(\OutNeighbours{\Vertex}\) the set of vertices \(u\) such that \((\Vertex, u) \in E\).
    We say that the MDP \(\MDP\) is finite if the set \(V\) is finite. 
    Unless mentioned otherwise, we consider MDPs to be finite in this work.
    \item \((\VerticesPlayer, \VerticesRandom)\)  is a partition of the set \(\Vertices\) of vertices, where \(\VerticesPlayer\) denotes the set of vertices belonging to the player and \(\VerticesRandom\) denotes probabilistic vertices. 
    \item \(\ProbabilityFunction \colon \VerticesRandom \to \DistributionSet{\VerticesPlayer}\) is the \emph{probability function} that returns the probability distribution over the out-neighbours of probabilistic vertices.
    We require for every probabilistic vertex \(v \in \VerticesRandom\) that \(\Support{\ProbabilityFunction(v)} = \Edges(v)\), that is, for all vertices \(v' \in \Vertices\), we have that \(\ProbabilityFunction(v)(v') > 0\) if and only if \(v'\) is an out-neighbour of \(v\).
    \item \(\PayoffFunction \colon E \to \Integers\) is the \emph{payoff function} that defines an integer payoff for every edge in the arena.
    Let \(W_\MDP\) be the maximum weight appearing on the edges in \(\MDP\).
    We drop the subscript when it is clear from the context.
\end{itemize}
With a little abuse of nomenclature, we mean by \emph{self-loop} from a vertex \(\Vertex\) a sequence of two edges starting from and ending at \(v\) so that player vertices and probabilistic vertices alternate.
A payoff \(\Threshold \) on the self-loop here denotes that both the edges that are part of the self-loop have the same payoff \(\Threshold\).

A \emph{run} of the MDP begins by placing a token on an initial vertex which is a player vertex and proceeds in steps.
In each step, if the token is on a player vertex \(\Vertex\), then the player chooses an out-edge of \(\Vertex\) and moves the token along that edge.
Otherwise, if the token is on a probabilistic vertex \(\Vertex\), then the out-edge is chosen by the probability distribution \(\Prob(v)\).
This continues \emph{ad infinitum}, resulting in a run \(\Run\) that is an infinite path in the arena. 

For a run \(\Run = v_{0} v_{1} v_{2} \cdots\), we denote by \(\Run(i)\) the vertex \(v_i\), by \(\RunInfix{i}{j}\) the infix \(v_{i} \cdots v_{j}\), by \(\RunPrefix{j}\) the finite prefix \(v_{0} v_{1} \cdots v_{j}\), and by \(\RunSuffix{i}\) the infinite suffix \(v_{i} v_{i+1} \cdots\).
The length of an infix \(\RunInfix{i}{j}\) is the number of edges, that is \(j - i\), and is denoted by \(\abs{\RunInfix{i}{j}}\).
We denote by \(\RunSet^\MDP\), \(\PrefixSet^\MDP\), and \(\PrefixSet^\MDP_\Player\) the set of all runs, the set of all finite prefixes in \(\MDP\), and the set of all finite prefixes in \(\MDP\) ending in a vertex in \(\VerticesPlayer\) respectively.
We drop the superscript \(\MDP\) when they are clear from the context.
We denote by \(\inf(\Run)\) the set of vertices in \(V\) that occur infinitely often in \(\Run\).

An MDP where every vertex in \(\VerticesPlayer\) has exactly one out-neighbour is called a \emph{Markov chain} and an MDP where every vertex in \(\VerticesRandom\) has exactly one out-neighbour is called a \emph{one-player game}.

\Cref{fig:bwc-example} shows an example of an MDP.
In figures, in MDPs, we denote player vertices by circles and probabilistic vertices by diamonds.

\begin{figure}[t]
    \centering
    \begin{tikzpicture}
        \node[state] (v0) {\(v_{0}\)};
        \node[random, draw, right of=v0] (v1) {\(v_{1}\)};
        \node[state, right of=v1] (v2) {\(v_{2}\)};
        \node[random, draw, right of=v2] (v3) {\(v_{3}\)};
        \node[state, right of=v3] (v4) {\(v_{4}\)};
        \node[random, draw, above right of=v4] (v5) {\(v_{5}\)};
        \node[random, draw, below right of=v4] (v6) {\(v_{6}\)};
        \node[state, above right of=v6] (v7) {\(v_{7}\)};
        \node[random, draw, right of=v7] (v8) {\(v_{8}\)};
        \draw 
              (v0) edge[bend left] node[above, pos=0.3]{\(\EdgeValues{-1}{}\)} (v1)
              (v1) edge[bend left] node[below, pos=0.3]{\(\EdgeValues{+2}{.3}\)} (v0)
              
              (v1) edge[bend left] node[above, pos=0.3]{\(\EdgeValues{+1}{.7}\)} (v2)
              (v2) edge[bend left] node[below, pos=0.3]{\(\EdgeValues{0}{}\)} (v1)
              
              (v2) edge[bend left] node[above, pos=0.3]{\(\EdgeValues{0}{}\)} (v3)
              (v3) edge[bend left] node[below, pos=0.3]{\(\EdgeValues{-1}{.8}\)} (v2)
              
              (v3) edge node[above, pos=0.3]{\(\EdgeValues{+4}{.2}\)} (v4)
             
              (v4) edge node[above left, pos=0.3]{\(\EdgeValues{0}{}\)} (v5)
              (v5) edge node[above right, pos=0.3]{\(\EdgeValues{0}{1}\)} (v7)
              (v7) edge node[below right, pos=0.3]{\(\EdgeValues{0}{}\)} (v6)
              (v6) edge node[below left, pos=0.3]{\(\EdgeValues{+10}{1}\)} (v4)
              
              (v7) edge[bend left] node[above, pos=0.3]{\(\EdgeValues{+6}{}\)} (v8)
              (v8) edge[bend left] node[below, pos=0.3]{\(\EdgeValues{-4}{1}\)} (v7)
        ;
    \end{tikzpicture}
    \caption{An example of an MDP.}
    \label{fig:bwc-example}
\end{figure}
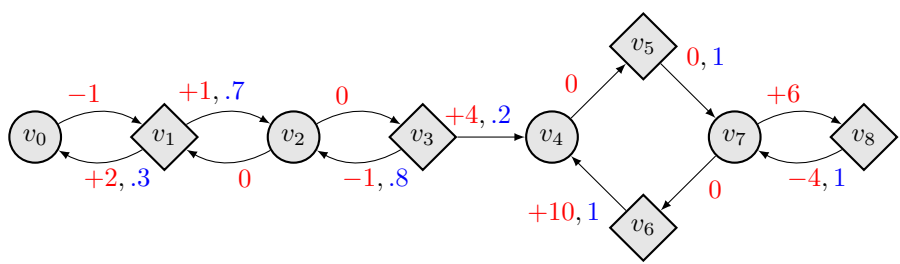

\paragraph*{Boolean objectives.}
Depending on the specifications, some runs are desirable for the player, and some are not.
Objectives can be broadly classified into two types:
Boolean or quantitative. 
A \emph{Boolean objective} \(\Objective\) is a set of runs that are desirable for the player.
We say a run \(\Run \in \RunSet\) \emph{satisfies} an objective \(\Objective\) if \(\Run \in \Objective\).
Given a set \(T \subseteq V\) of target vertices, a common Boolean objective is the \emph{reachability objective}, defined as \(\Reach{T} = \{\Run \in \RunSet \:|\: \exists i \ge 0, \Run(i) \in T\}\), i.e., the set of runs that visit \(T\).
Some other examples of Boolean objectives include safety, \Buchi, co\Buchi, and parity.

\paragraph*{Quantitative objectives.}
A \emph{quantitative objective} is a function \(\Objective \colon \RunSet \to \Rationals \cup \{\pm \infty\}\) that assigns to each run in the MDP a numerical value that denotes how good the run is for the player.
Some common examples of quantitative objectives include mean-payoff, discounted-sum payoff, energy payoff, total payoff and liminf payoff.
For a run \(\Run = v_0 v_1 v_2 \cdots\), the liminf mean-payoff objective is defined as follows:
\(\ObjectiveMP(\Run) \define \liminf_{n \to \infty} \frac{1}{n} \sum_{i=0}^{n} \PayoffFunction(v_i, v_{i+1})\).
The objectives studied in this paper, the window mean-payoff objectives (defined later in this section), are also quantitative objectives.
Corresponding to a quantitative objective \(\Objective\), we define \emph{threshold Boolean objectives} \(\{\Run \in \RunSet \mid \Objective(\Run) \ge \Threshold \}\), for thresholds \(\Threshold \in \Reals\).
We denote these objectives succinctly as \(\ThresholdObjective{\ge \Threshold}\).

A quantitative objective \(\Objective\) is \emph{closed under suffixes} if for all runs \(\Run\), and for all suffixes \(\RunSuffix{j}\) of \(\Run\), we have that \(\Objective(\Run) = \Objective(\RunSuffix{j})\). 
An objective \(\Objective\) is \emph{closed under prefixes} if for all runs \(\Run \) and all prefixes \(\Prefix\) such that \(\Prefix \cdot \Run \in \RunSet^{\MDP}\), we have that \(\Objective(\Run) = \Objective(\Prefix \cdot \Run)\).
An objective \(\Objective\) is \emph{prefix-independent} if it is closed under both prefixes and suffixes.
Mean-payoff is an example of prefix-independent objectives.
Prefix independence can be defined analogously for Boolean objectives.

\paragraph*{Strategies.}
A \emph{strategy} for the player in an MDP \(\MDP\) is a function \(\Strategy: \PrefixSet_\Player \to \DistributionSet{\Vertices}\) that maps prefixes ending in a vertex \(v \in \VerticesPlayer\) to a distribution over the successors of \(v\). 
The set of all strategies of the player in the MDP \(\MDP\) is denoted by \(\StrategySet{\MDP}\).
Strategies can be realised as the output of a (possibly infinite-state) Mealy machine.
A \emph{Mealy machine} is a deterministic transition system with transitions labelled by input/output pairs. 
Formally, a Mealy machine \(M\) is a tuple \((Q, q_0, \Sigma_i, \Sigma_o, \Delta, \delta)\) 
where 
\(Q\) is the set of states of \(M\) (the memory of the induced strategy), \(q_0 \in Q\) is the initial state,
\(\Sigma_i\) is the input alphabet, 
\(\Sigma_o\) is the output alphabet,
\(\Delta \colon Q \times \Sigma_i \to Q\) is a transition function that reads the current state of~\(M\) and an input letter and returns the next state of  \(M\), and \(\delta \colon Q \times \Sigma_i \to \Sigma_o \) is an output function that reads the current state of
\(M\) and an input letter and returns an output letter.
The transition function \(\Delta\) can be extended to a function \(\widehat{\Delta} \colon Q \times \Sigma_i^+ \to Q\) that reads words and can be defined inductively by \(\widehat{\Delta}(q, a) = \Delta(q, a)\) and \(\widehat{\Delta}(q, x \cdot a) = \Delta(\widehat{\Delta}(q, x), a) \), for \(q \in Q\), \(x \in \Sigma_i^+\), and \(a \in \Sigma_i\).

A strategy of the player can be defined by a Mealy machine \(M = (Q, q_0, \Vertices, \DistributionSet{\Vertices}, \Delta, \delta)\) as follows: 
Given a prefix \(\Prefix' \cdot \Vertex \in \PrefixSet_\Player\) ending in a player vertex \(\Vertex\), the strategy \(\Strategy\) defined by the Mealy machine \(M\) is \(\Strategy(\Prefix' \cdot \Vertex) = \delta(\widehat{\Delta}(q_0, \Prefix'), \Vertex)\).
Intuitively, in each step, if the token is on a vertex~\(v\) that belongs to the player, then~\(v\) is given as input to the Mealy machine, and the Mealy machine outputs a distribution over the successor probabilistic vertices of~\(v\) that the player must choose. 
Otherwise, the token is on a vertex \(v\) that is a probabilistic vertex, in which case, the Mealy machine outputs the distribution \(\ProbabilityFunction(v)\) that is part of the MDP \(\MDP\).

A strategy is \emph{deterministic} if for every prefix \(\Prefix \in \PrefixSet_\Player\), we have that \(\Strategy(\Prefix)\) is Dirac, otherwise, it is \emph{randomised}. 
The \emph{memory size} of a strategy \(\Strategy\) is the smallest number of states a Mealy machine defining \(\Strategy\) can have. 
A strategy \(\Strategy\) is \emph{memoryless} if \(\Strategy(\Prefix)\) only depends on the last element of the prefix~\(\Prefix\), that is, for all prefixes \(\Prefix, \Prefix' \in \PrefixSet_\Player\) if \(\Last{\Prefix} = \Last{\Prefix'}\), then \(\Strategy(\Prefix) = \Strategy(\Prefix')\).
Memoryless strategies can be defined by Mealy machines with only one state. 
Fixing a strategy \(\Strategy\) of the player in an MDP yields a (possibly infinite-state) Markov chain, and we represent this by \(\MDP^{\Strategy}\).
Finally, if the same strategy can be used regardless of the initial state, we call it a \emph{uniform strategy}.

A run \(\Run = v_0 v_1 \dotsm\) is \emph{consistent} with a strategy \(\Strategy \in \StrategySet{\MDP}\) if for all \(j \geq 0\) with \(v_{j} \in \Vertices\), we have \(v_{j+1} \in \Support{\Strategy(\RunPrefix{j})}\). 
A run \(\Run\) is an \emph{outcome} of a strategy \(\Strategy\) if \(\Run\) is consistent with \(\Strategy\). 
We denote by \(\Outcomev{\Vertex}{\MDP}{\Strategy}\) the set of runs of \(\MDP\) that start from \(\Vertex\) and are consistent with strategy \(\Strategy\).

\paragraph*{Satisfaction probability of Boolean objectives.}
For a Boolean objective \(\Objective\), we denote by \(\Pr{\Strategy}{\MDP, v}{\Objective}\) the probability that an outcome of the strategy \(\Strategy\) in \(\MDP\) with initial vertex \(v\) satisfies \(\Objective\). 
The \emph{cone} at \(\Prefix\) is the set \(\Cone{\Prefix} \define \{\Run \in \RunSet^{\MDP} \suchthat \Prefix \text{ is a prefix of } \Run \}\), the set of all runs having \(\Prefix\) as a prefix. 
First, we define this probability measure over cones inductively as follows. 
If \(\abs{\Prefix} = 0\), then \(\Prefix \) is just a vertex \(v_{0}\), and 
\(\Pr{\Strategy}{\MDP, v}{\Cone{\Prefix}}\) is \(1\) if \(v = v_0\), and \(0\) otherwise. 
For the inductive case \(\abs{\Prefix} > 0\), 
there exist \(\Prefix' \in \PrefixSet\) and \(v' \in \Vertices\) such that \(\Prefix = \Prefix' \cdot v'\), and we have \(\Pr{\Strategy}{\MDP, v}{\Cone{\Prefix' \cdot v'}} = \Pr{\Strategy}{\MDP, v}{\Cone{\Prefix'}}  \cdot \Strategy(\Prefix')(v')\).
As shown in~\cite{Vardi85}, it is sufficient to define \(\Pr{\Strategy}{\MDP, v}{\Objective}\) on cones in \(\MDP\) since a measure defined on cones extends to a unique measure on \(\RunSet\) by Carath\'{e}odory's extension theorem~\cite{Billingsley86}.

Given an MDP \(\MDP\) with a Boolean objective \(\Objective\),  starting from a vertex \(v\) in \(\MDP\), we are interested in finding the maximum probability with which the player can ensure that the objective \(\Objective\) is satisfied.
In the decision problem, we ask if given \(p \in [0, 1] \intersection \Rationals\), can the player ensure, with probability at least \(p\), that an outcome satisfies \(\Objective\).
Formally, we ask if there exists a strategy \(\Strategy \in \StrategySet{\MDP}\) of the player such that \(\Pr{\Strategy}{\MDP, v}{\Objective} \ge p\).

\paragraph*{Expected value of quantitative objectives.}
For a quantitative objective \(\Objective\), we are interested in determining the maximum value of \(\Objective\) that the player can ensure in expectation.
Formally, given a strategy \(\Strategy\) and an initial vertex \(v\), we denote by \(\ExpectationOutcome{\Strategy}{\MDP, v}{\Objective}\) the \emph{expected \(\Objective\)-value of an outcome} of \(\Strategy\) from \(v\), that is, the expectation of \(\Objective\) over all plays with initial vertex \(v\) under the probability measure \(\Pr{\Strategy}{\MDP, v}{\Objective}\). 
The expected \(\Objective\)-value of a vertex \(v\) is \(\ExpectationValue{\MDP}{v}{\Objective} =  \sup_{\Strategy} \ExpectationOutcome{\Strategy}{\MDP, v}{\Objective} \) is the supremum over all strategies \(\Strategy\) of the player of the expected \(\Objective\)-value of the outcomes from \(v\).
For all \(\epsilon > 0\), a strategy \(\Strategy\) is \(\epsilon\)-\emph{optimal} for objective \(\Objective\) if it is \(\epsilon\)-close to the value of the vertex, that is, if \(\ExpectationOutcome{\Strategy}{\MDP, v}{\Objective} \ge \ExpectationValue{\MDP}{v}{\Objective} - \epsilon\).
A strategy \(\Strategy\) is \emph{optimal} for objective \(\Objective\) if it achieves the value of the vertex, that is, if \(\ExpectationOutcome{\Strategy}{\MDP, v}{\Objective} = \ExpectationValue{\MDP}{v}{\Objective}\).

\paragraph*{Two-player games.}
Observe that an MDP can be seen as a game where \(\Player\) plays against a stochastic adversary and are sometimes called \(1\frac{1}{2}\)-player games.
We would also need to consider real two-player games in this paper.
An MDP \(\MDP = ((\Vertices, \Edges), (\VerticesPlayer, \VerticesRandom), \ProbabilityFunction, \PayoffFunction)\) can be seen as a two-player game, denoted \(\Game_\MDP = ((\Vertices, \Edges), (\VerticesPlayer, \VerticesRandom), \PayoffFunction)\) where the probability vertices in \(\VerticesRandom\) are interpreted as vertices belonging to an adversarial environment, the probability function \(\ProbabilityFunction\) is forgotten, and the adversary chooses a strategy of its choice.
Using both interpretations, which are that of a stochastic model as well as the adversarial environment is crucial to our work.

\paragraph*{Maximal end components.}
An \emph{end component} (EC) in an MDP is a subset \(T \subseteq V\) of vertices such that 
for every probabilistic vertex \(v\) in \(T\), every out-neighbour of \(v\) belongs to the subset \(T\), and
\(T\) is strongly connected, that is, for every pair of vertices \(v, v'\) in the subset \(T\), the player has a strategy to reach \(v'\) from \(v\) with probability~\(1\).
Intuitively, an EC \(T\) is a set of vertices such that
if the token is in \(T\), then the player can ensure with probability~\(1\) that the token never leaves \(T\), and that the player can almost surely visit every vertex in \(T\) from every other vertex in \(T\) without leaving \(T\).
A \emph{maximal end component} (MEC) is an EC that is not contained in any other EC.
The MECs in an MDP are disjoint.
Each vertex in an MDP belongs to either no MEC or exactly one MEC.
Thus, the number of MECs is bounded above by the number of vertices in the MDP.
We denote by \(\mathfrak{M}_\MDP\) the set of MECs of \(\MDP\).
The subscript \(\MDP\) is dropped when it is clear from the context.
The MEC decomposition can computed in polynomial time~\cite{CH14}.

We now recall some of the classical results on Markov Decision Processes that will be used later.
\begin{lemma}[Optimal reachability~\cite{BK08}]%
\label{lem:optimalreach}
    Given an MDP \(\MDP\) and a set \(T \subseteq V\) of target states, we can compute  in polynomial time for each vertex \(v \in V\), the probability \(p^*_v = \sup_\Strategy \Pr{\Strategy}{\MDP, v}{\Reach{T}}\) with which the player can ensure visiting \(T\).
    There is an optimal uniform pure memoryless strategy \(\Strategy^*\) that enforces reaching \(T\) with probability \(p^*_v\) from every vertex \(v \in V\).
    
    Further, for all \(v \in V, c < p^*_v\), there exists \(N \in \NonNegativeNaturals\) such that by playing \(\Strategy^*\) for \(N\) steps, we reach \(T\) from \(\Vertex\) with probability greater than \(c\).
\end{lemma}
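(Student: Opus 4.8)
The plan is to characterise the optimal reachability values $p^*_v$ as the least fixpoint of the Bellman optimality operator, exploit the associated linear program for the polynomial-time bound, extract a greedy optimal memoryless strategy from the computed values, and finally derive the finite-horizon statement by a monotone-convergence argument.

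First I would set up the Bellman optimality equations. Writing $x_v$ for the value at vertex $v$, the optimal values satisfy $x_v = 1$ for $v \in T$; $x_v = \max_{u \in \OutNeighbours{v}} x_u$ for $v \in \VerticesPlayer \setminus T$; and $x_v = \sum_{u \in \OutNeighbours{v}} \ProbabilityFunction(v)(u)\, x_u$ for $v \in \VerticesRandom \setminus T$. The standard fact is that $(p^*_v)_{v \in V}$ is the \emph{least} nonnegative solution of this system. That it is \emph{a} solution follows from a one-step unfolding of the probability measure on cones defined earlier; the subtlety is that the system admits spurious larger solutions (for instance $x_v = 1$ everywhere), so minimality must be argued. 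This is exactly the role of the linear program that minimises $\sum_{v} x_v$ subject to $x_v \ge 1$ for $v \in T$, $x_v \ge x_u$ for each player edge, and $x_v \ge \sum_{u} \ProbabilityFunction(v)(u)\, x_u$ for each probabilistic vertex: the feasible region (the super-solutions of the monotone operator) forms a lattice with a least element, which the all-ones objective singles out. Since the program has polynomially many variables and constraints with rational coefficients, it is solvable in polynomial time, giving the complexity claim.

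Next I would construct $\Strategy^*$. For each player vertex $v \notin T$ with $p^*_v > 0$, let $\Strategy^*(v)$ pick some successor $u$ attaining $p^*_u = p^*_v$ (ties broken arbitrarily); this is pure, memoryless, and uniform. The main obstacle is verifying that this greedy strategy \emph{attains} $p^*_v$ rather than merely upper-bounding it: committing to a maximising successor everywhere could in principle keep probability mass cycling among positive-value vertices without ever entering $T$. I would rule this out by restricting to the subgraph induced by the chosen player edges together with the probabilistic edges on the support $\{v : p^*_v > 0\}$, and showing that in the Markov chain $\MDP^{\Strategy^*}$ every bottom strongly connected component meets $T$; equivalently, the least-fixpoint property forces any closed recurrent class avoiding $T$ to have value $0$. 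Combined with the fact that reachability probabilities in a finite Markov chain are the least solution of the corresponding linear system, this yields $\Pr{\Strategy^*}{\MDP, v}{\Reach{T}} = p^*_v$.

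Finally, for the finite-horizon claim I would define the bounded events $R_N = \{\Run : \exists\, i \le N,\ \Run(i) \in T\}$ under $\Strategy^*$. These are increasing in $N$ with $\bigcup_N R_N = \Reach{T}$, so by continuity of the probability measure from below, $\Pr{\Strategy^*}{\MDP, v}{R_N} \to \Pr{\Strategy^*}{\MDP, v}{\Reach{T}} = p^*_v$ as $N \to \infty$. Hence for any $c < p^*_v$ there is an $N$ with $\Pr{\Strategy^*}{\MDP, v}{R_N} > c$, which is exactly the stated property. I expect the fixpoint-attainment step in the strategy construction to be the only nonroutine part; everything else is a direct appeal to the linear-program characterisation and to monotone convergence.
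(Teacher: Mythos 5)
This lemma is quoted from the literature and the paper gives no proof of its own (it simply cites Baier--Katoen), so the only question is whether your reconstruction is sound. Most of it is: the least-fixpoint/linear-programming characterisation of $p^*$ is correct (the feasible region is meet-closed, every feasible point dominates $p^*$ by induction on the $n$-step values, and $p^*$ itself is feasible, so minimising $\sum_v x_v$ does single it out), and the monotone-convergence argument for the finite-horizon claim is exactly right.

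However, the strategy-extraction step --- which you yourself flag as the only nonroutine part --- contains a genuine error. Choosing, at each player vertex, \emph{an arbitrary} successor attaining $p^*_u = p^*_v$ does not yield an optimal strategy, and the claim that ``the least-fixpoint property forces any closed recurrent class avoiding $T$ to have value $0$'' is false. Concretely, take a player vertex $v$ with two probabilistic successors $u_1,u_2$, where $u_1$ returns to $v$ with probability $1$, while $u_2$ moves to a target vertex $t$ with probability $\tfrac12$ and to a losing sink with probability $\tfrac12$. The least solution gives $p^*_{u_2}=\tfrac12$ and hence $p^*_v=p^*_{u_1}=\tfrac12$, so \emph{both} successors are maximising; but the greedy strategy that picks $u_1$ induces the run $(v\,u_1)^\omega$, a bottom strongly connected component that avoids $T$ even though all of its vertices have value $\tfrac12>0$. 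The reachability probability achieved is $0$, not $p^*_v$. The standard repair is to break ties with a progress measure: define $d(v)=0$ for $v\in T$ and, on the set $\{v : p^*_v>0\}$, let $d(v)$ be the length of a shortest path to $T$ that uses only value-preserving player edges (and arbitrary probabilistic edges); one shows $d(v)<\infty$ whenever $p^*_v>0$ (otherwise the values could be lowered on the set $\{d=\infty\}$, contradicting minimality of $p^*$), and then $\Strategy^*$ picks a maximising successor that also decreases $d$. With that selection your bottom-SCC argument goes through and the rest of your proof (uniformity, and the convergence of the truncated events $R_N$) is fine.
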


For an arbitrary strategy \(\Strategy\) of the player, it is almost-surely the case that an outcome ends up in an MEC of \(\MDP\).
\begin{lemma}[Long-run appearance in MECs~\cite{BK08}]%
\label{lem:longrunMEC}
    Given an MDP \(\MDP\) with a set \(\Vertices\) of vertices, for every strategy \(\Strategy\) of the player and for every vertex \(v \in \Vertices\), we have that \(\sum_{\MEC \in \MECs} \Pr{\Strategy}{\MDP, v}{\inf(\Run) \subseteq M}\) = 1.
\end{lemma}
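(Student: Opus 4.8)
The plan is to prove the stronger structural fact that, for every strategy $\Strategy$ and every start vertex $v$, almost every outcome $\Run$ satisfies that $\inf(\Run)$ is an end component, and then to read off the statement from the fact that every end component is contained in a \emph{unique} MEC. For each $\MEC \in \MECs$ set $A_{\MEC} = \{\Run : \inf(\Run) \subseteq \MEC\}$. Since the MDP is finite, every infinite run visits some vertex infinitely often, so $\inf(\Run) \neq \emptyset$; as distinct MECs are disjoint, the events $A_{\MEC}$ are pairwise disjoint, and hence $\sum_{\MEC \in \MECs} \Pr{\Strategy}{\MDP,v}{A_{\MEC}} = \Pr{\Strategy}{\MDP,v}{\bigcup_{\MEC \in \MECs} A_{\MEC}}$. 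It therefore suffices to show that this union has probability one, and for this I would show that almost surely the pair $(\inf(\Run), \inf_E(\Run))$ is an end component, where $\inf_E(\Run)$ denotes the set of edges traversed infinitely often; any end component lies inside a single MEC, which places $\Run$ in the corresponding $A_{\MEC}$.

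I would check the two defining properties of an end component in turn. The first, and the main obstacle, is the closure condition: whenever a probabilistic vertex $v$ belongs to $\inf(\Run)$, every out-neighbour $u \in \OutNeighbours{v}$ must also belong to $\inf(\Run)$. Fix such a $v$ and $u$, write $p = \ProbabilityFunction(v)(u) > 0$, and let $E_n$ be the event ``at the $n$-th visit to $v$ the token moves to $u$''. Because the move out of a probabilistic vertex is governed by $\ProbabilityFunction$ independently of the past and of the strategy, the conditional probability of $E_n$ given the history up to the $n$-th visit to $v$ equals $p$. Hence on the event $\{v \in \inf(\Run)\}$ these conditional probabilities sum to $\sum_n p = \infty$, and L\'evy's conditional extension of the second Borel--Cantelli lemma forces infinitely many $E_n$ to occur almost surely there; in particular the edge $(v,u)$ lies in $\inf_E(\Run)$ and $u \in \inf(\Run)$. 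The reason this is the crux is that for an arbitrary (infinite-memory, randomised) strategy the induced chain $\MDP^{\Strategy}$ may have infinitely many states, so one cannot simply appeal to recurrence in a finite Markov chain; the filtration-based conditional Borel--Cantelli argument avoids this by using only the fixed, strategy-independent transition probabilities at probabilistic vertices. Taking the intersection over the finitely many pairs $(v,u)$ keeps this a probability-one event.

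The second property, strong connectivity of $(\inf(\Run), \inf_E(\Run))$, holds for \emph{every} run and is purely combinatorial. Since only finitely many edges are used finitely often, there is a time $T_0$ after which $\Run$ traverses only edges of $\inf_E(\Run)$ and visits only vertices of $\inf(\Run)$; for any $u, w \in \inf(\Run)$, both are visited after $T_0$, so a visit to $u$ followed by a later visit to $w$ exhibits a walk from $u$ to $w$ using only $\inf_E(\Run)$-edges, and symmetrically from $w$ to $u$. This makes the induced subgraph strongly connected, which is exactly what is needed for the player to move with probability one between any two vertices while staying inside. Combining the two properties on the probability-one event of the previous paragraph, $(\inf(\Run), \inf_E(\Run))$ satisfies both conditions of an end component, so $\inf(\Run)$ is contained in a unique MEC. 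Thus $\Pr{\Strategy}{\MDP,v}{\bigcup_{\MEC \in \MECs} A_{\MEC}} = 1$, and the disjoint sum above equals one, completing the proof.
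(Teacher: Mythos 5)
Your proof is correct: the paper states this lemma without proof, citing [BK08], and your argument is essentially the standard one from that reference — showing via the conditional Borel--Cantelli lemma that $\inf(\Run)$ is almost surely closed under probabilistic successors, observing that it is trivially strongly connected via the infinitely-traversed edges, and concluding that it is an end component contained in a unique MEC, so that the pairwise-disjoint events sum to one. Nothing further is needed.
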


\paragraph*{Window mean-payoff objectives.}
In this work, we look at the \(\BWC\) framework in the context of the window mean-payoff objectives.
We first define Boolean versions of the objective.

For a run \(\Run = v_{0} v_{1} v_{2} \cdots\) in an MDP \(\MDP\), the \emph{mean payoff} of an infix \(\RunInfix{i}{i+n} \)  is the average of the payoffs of the edges in the infix and  is defined as \(\mathsf{MP}(\RunInfix{i}{i+n}) = \sum_{k=i}^{i+n-1} \frac{1}{n} \PayoffFunction(\Vertex[k], \Vertex[k+1])\).
Given a window length \(\WindowLength \geq 1\) and a threshold \(\Threshold \in \Rationals\), a run \(\Run = \Vertex[0] \Vertex[1] \dotsm \)  in \(\MDP\) satisfies the \emph{fixed window mean-payoff objective} \(\FWMP_{\MDP}(\WindowLength, \Threshold)\) if from every position after some point, it is possible to start an infix of length at most $\WindowLength$ with mean payoff at least \(\Threshold\).
\begin{equation*}
    \FWMP_{\MDP}(\WindowLength, \Threshold) = \{ \Run \in \RunSet^\MDP \mid \exists k \geq 0 \cdot \forall i \ge k \cdot \exists j \in \PositiveSet{\WindowLength}: \mathsf{MP}(\Run(i, i + j)) \ge \Threshold\}
\end{equation*} 
We omit the subscript \(\MDP\) when it is clear from the context. 
Given a threshold \(\Threshold\), a run \(\Run = \Vertex[0] \Vertex[1] \cdots\), and \(0 \le i<  j\), we say that the \emph{\(\Threshold\)-window} \(\Run(i, j)\) is \emph{open} if the mean payoff of \(\Run(i,k)\) is less than \(\Threshold\) for all \(i < k \le j\). 
Otherwise, the \(\Threshold\)-window is \emph{closed}.
Given \(j > 0\), we say a \(\Threshold\)-window is open at \(j\) if there exists an open \(\Threshold\)-window \(\Run(i, j)\) for some \(i < j\). 
The \(\Threshold\)-window starting at position \(i\) \emph{closes} at position \(j\) if \(j\) is the first position after \(i\) such that the mean payoff of \(\Run(i, j)\) is at least \(\Threshold\). 
Corresponding to the Boolean objective \(\FWMP_{\MDP}(\WindowLength, \Threshold)\), we define a quantitative version of the objective as follows:
Given a run \(\Run\) in  an MDP \(\MDP\), the \(\ObjectiveFWMPL\)-value of \(\Run\) is equal to  \(\sup \{\Threshold \in \Reals \suchthat \Run \in \FWMP(\WindowLength, \Threshold) \} \), the supremum threshold~\(\Threshold\) such that the run satisfies \(\FWMP_{\MDP}(\WindowLength, \Threshold)\).
For a run \(\Run\) in an MDP, the \(\ObjectiveFWMPL\)-value of \(\Run\) can be of the form \(\frac{a}{b}\) where \(a \in \{-W \cdot \WindowLength, \dots, 0, \dots, W \cdot \WindowLength\}\) and \(b \in \{1, \dots, \WindowLength\}\).
Hence the \(\ObjectiveFWMPL\)-value can be one of finitely many values leading to the following.
\begin{proposition}%
\label{prop:FWMP}
    For all \(\WindowLength \ge 1\) and all \(\Threshold \in \Reals\), we have \(\Run \in \{\ObjectiveFWMPL \ge \Threshold\}\) if and only if \(\Run \in \FWMP(\WindowLength, \Threshold)\).
\end{proposition}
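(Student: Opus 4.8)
The plan is to prove the biconditional by two implications, of which only one carries real content. Write \(v(\Run) \define \ObjectiveFWMPL(\Run) = \sup S_\Run\), where \(S_\Run \define \{\Threshold' \in \Reals \mid \Run \in \FWMP(\WindowLength, \Threshold')\}\). The first step is to record the monotonicity of \(\FWMP\) in its threshold argument: if \(\Threshold_1 \le \Threshold_2\), then a window of length at most \(\WindowLength\) with mean payoff at least \(\Threshold_2\) also has mean payoff at least \(\Threshold_1\), so \(\FWMP(\WindowLength, \Threshold_2) \subseteq \FWMP(\WindowLength, \Threshold_1)\); in particular \(S_\Run\) is downward closed. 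The backward implication is then immediate, since \(\Run \in \FWMP(\WindowLength, \Threshold)\) gives \(\Threshold \in S_\Run\) and hence \(v(\Run) = \sup S_\Run \ge \Threshold\). For the forward implication, suppose \(v(\Run) \ge \Threshold\); if the inequality is strict, then \(\Threshold\) is not an upper bound of \(S_\Run\), so some \(\Threshold' \in S_\Run\) has \(\Threshold' > \Threshold\), and downward closure gives \(\Run \in \FWMP(\WindowLength, \Threshold)\).

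The only delicate case is \(v(\Run) = \Threshold\). Here I must show that the supremum defining \(v(\Run)\) is attained, i.e.\ \(\Run \in \FWMP(\WindowLength, v(\Run))\); downward closure then finishes the forward direction. Showing that this supremum is a maximum is the crux of the proof, and it is precisely where the finiteness observed just before the statement is used.

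To attain the supremum, I would exploit that every window mean payoff of length at most \(\WindowLength\) lies in the finite set \(\mathcal{F}\) of rationals \(a/b\) with \(a \in \{-W\WindowLength, \dots, W\WindowLength\}\) and \(b \in \{1, \dots, \WindowLength\}\). Since window mean payoffs only take values in \(\mathcal{F}\), a window has mean payoff at least \(\Threshold'\) if and only if it has mean payoff at least the least element of \(\mathcal{F}\) that is \(\ge \Threshold'\); hence the membership of \(\Run\) in \(\FWMP(\WindowLength, \Threshold')\) is constant as \(\Threshold'\) ranges strictly between consecutive elements of \(\mathcal{F}\) and can only switch at the points of \(\mathcal{F}\). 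Now \(S_\Run\) is nonempty (every \(\Threshold' \le \min \mathcal{F}\) lies in it, witnessed at each position by a length-one window) and bounded above (no \(\Threshold' > \max \mathcal{F}\) is witnessed), so \(m \define \max(S_\Run \cap \mathcal{F})\) exists; the step structure then forces every \(\Threshold' > m\) out of \(S_\Run\), giving \(\sup S_\Run = m \in S_\Run\). Thus \(\Run \in \FWMP(\WindowLength, v(\Run)) = \FWMP(\WindowLength, \Threshold)\). I expect the main obstacle to be the careful justification that \(\FWMP(\WindowLength, \cdot)\) changes the membership of \(\Run\) only at the finitely many values of \(\mathcal{F}\), so that the supremum is attained; the surrounding monotonicity and the nonemptiness/boundedness checks are routine.
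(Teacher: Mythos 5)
Your proof is correct and follows essentially the same route as the paper, which justifies the proposition solely by the preceding remark that the \(\ObjectiveFWMPL\)-value ranges over the finite set of fractions \(a/b\) with \(a \in \{-W\WindowLength,\dots,W\WindowLength\}\), \(b \in \{1,\dots,\WindowLength\}\), so that the supremum in the definition of the value is attained. You simply spell out the routine details (monotonicity of \(\FWMP\) in the threshold, and the step-function argument showing membership can only change at points of \(\mathcal{F}\)) that the paper leaves implicit.
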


We also consider another window mean-payoff objective called the \emph{bounded window mean-payoff objective} \(\BWMP_\MDP(\Threshold)\).
A run satisfies the objective \(\BWMP_\MDP(\Threshold)\) if there exists a window length \(\WindowLength \ge 1\) such that the run satisfies \(\FWMP_\MDP(\WindowLength, \Threshold)\).
\[
    \BWMP_{\MDP}(\Threshold) = \{ \Run \in \RunSet^\MDP \suchthat \exists \WindowLength \ge 1 : \Run \in \FWMP_{\MDP}(\WindowLength, \Threshold)\}
\]
Equivalently, a run \(\Run\) does not satisfy \(\BWMP_\MDP(\Threshold)\) if and only if for every suffix of \(\Run\), for all \(\WindowLength \ge 1\), the suffix contains an open \(\Threshold\)-window of length \(\WindowLength\).

We define the \(\ObjectiveBWMP\)-value of a run \(\Run\) analogously to \(\ObjectiveFWMPL\).
Given a run \(\Run\) in  an MDP \(\MDP\), the \(\ObjectiveBWMP\)-value of \(\Run\) is equal to  \(\sup \{\Threshold \in \Reals \suchthat \Run \in \BWMP(\Threshold) \} \), or equivalently, \(\sup \{\Threshold \in \Reals \suchthat \exists \WindowLength \ge 1 : \Run \in \FWMP(\WindowLength, \Threshold) \} \).

An observation similar to Proposition~\ref{prop:FWMP} does not hold for the bounded window mean-payoff objective.
This is because since \(\WindowLength\) can be unbounded, there may be a run \(\Run\) such that \(\Run\) does not satisfy  \(\FWMP(\WindowLength, \ObjectiveBWMP(\Run) )\) for any \(\WindowLength \ge 1\).
However the following holds.
\begin{proposition}%
\label{prop:BWMP}
    For all \(\Threshold \in \Reals\), if \(\Run \in \BWMP(\Threshold)\), then \(\Run \in \{\ObjectiveBWMP \ge \Threshold\}\).
\end{proposition}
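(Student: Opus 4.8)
The plan is to unwind the definition of the $\ObjectiveBWMP$-value and observe that the claim reduces to the elementary fact that a supremum is an upper bound of its defining set. Writing $S_\Run = \{\Threshold' \in \Reals \suchthat \Run \in \BWMP(\Threshold')\}$, the definition of the objective gives $\ObjectiveBWMP(\Run) = \sup S_\Run$. Thus, under the hypothesis $\Run \in \BWMP(\Threshold)$, it suffices to show $\sup S_\Run \ge \Threshold$, from which $\Run \in \{\ObjectiveBWMP \ge \Threshold\}$ follows by the definition of the threshold objective.

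The core step is immediate: if $\Run \in \BWMP(\Threshold)$, then by definition $\Threshold \in S_\Run$, so $S_\Run$ is nonempty and $\Threshold$ is one of the elements over which the supremum is taken. Since a supremum is an upper bound of its set, $\ObjectiveBWMP(\Run) = \sup S_\Run \ge \Threshold$, as required. For context one may note that $S_\Run$ is bounded above by the maximum weight $W$, since no window has mean payoff exceeding $W$, so $\sup S_\Run$ is in fact a finite real; and that $S_\Run$ is downward closed, because a window witnessing a threshold $\Threshold_2$ also witnesses any smaller threshold $\Threshold_1 \le \Threshold_2$. Neither fact is needed for the stated direction, which uses only the upper-bound property.

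The only subtle point, and the reason the statement is phrased as a one-way implication unlike Proposition~\ref{prop:FWMP}, is that the converse fails. The reverse implication $\ObjectiveBWMP(\Run) \ge \Threshold \Rightarrow \Run \in \BWMP(\Threshold)$ would require the supremum $\sup S_\Run$ to be attained. For $\FWMP(\WindowLength, \cdot)$ the achievable mean-payoff values of a window range over the finite set of rationals $a/b$ with $b \le \WindowLength$, so the supremum is a maximum and the biconditional holds; for $\ObjectiveBWMP$ the window length is unbounded, these candidate values no longer form a finite set, and the supremum may fail to be attained. Hence there is no genuine obstacle in proving the stated direction — the entire content is the ``supremum is an upper bound'' observation — and the real conceptual point is recognising that the attainment needed for the converse is precisely what the unboundedness of $\WindowLength$ can destroy.
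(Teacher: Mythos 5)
Your proof is correct and matches the paper's (implicit) argument: the proposition is immediate from the definition of \(\ObjectiveBWMP(\Run)\) as the supremum of the set of thresholds \(\Threshold'\) with \(\Run \in \BWMP(\Threshold')\), since membership of \(\Threshold\) in that set forces the supremum to be at least \(\Threshold\). Your side remarks on boundedness, downward closure, and the failure of the converse are consistent with the paper's discussion around Figure~\ref{fig:BWMP}.
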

\begin{figure}[t]
    \centering
    \begin{tikzpicture}
        \node[state] (v1) {\(\Vertex[1]\)};
        \node[random, draw, right of=v1] (v2) {\(\Vertex[2]\)};
        \node[state, right of=v2] (v3) {\(\Vertex[3]\)};
        \draw 
              (v1) edge[bend left] node[above, pos=0.3] {\(\EdgeValues{-1}{}\)} (v2)
              (v2) edge[bend left] node[below, pos=0.3]{\(\EdgeValues{+1}{0.5}\)} (v1)
              (v2) edge[bend left, pos=0.3] node [above] {\(\EdgeValues{0}{.5}\)} (v3)
              (v3) edge[bend left, pos=0.3] node [below] {\(\EdgeValues{0}{}\)} (v2)
        ;
    \end{tikzpicture}
    \caption{The \(\ObjectiveBWMP\)-value of the run \(\Run'\) is \(0\) but \(\Run'\) does not belong to \(\BWMP(0)\) . }
    \label{fig:BWMP}
\end{figure}
The example from~\cite{CDRR15} shown in Figure~\ref{fig:BWMP} illustrates that the converse of Proposition~\ref{prop:BWMP} does not hold in general.
For every run \(\Run\) in the MDP shown in the figure, there exists a window length \(\WindowLength\) such that \(\Run \in \FWMP(\WindowLength, -\epsilon)\).
For example, for \(\epsilon=0.01\), window length \(\WindowLength=100\) suffices.
Thus a \(-0.01\)-window that opens at \(v_0\) closes in at most \(100\) steps.
Note that for all \(\epsilon > 0\), the window length required is \(\frac{1}{\epsilon}\).
Now consider a run \(\Run'\) such that when the token takes the edge (\(v_1, v_2\))
for the \(i^{\text{th}}\) time, then it loops over \(v_2\) and \(v_3\) \(i\) times, after which the token moves to \(v_1\).
Clearly, for all \(\WindowLength \ge 1\), we have that \(\Run' \notin \FWMP(\WindowLength, 0)\) but for all \(\epsilon > 0\), there exists an \(\WindowLength\) such that \(\Run' \in \FWMP(\WindowLength, -\varepsilon)\).
Thus the \(\ObjectiveBWMP\)-value of \(\Run'\) is \(0\) while \(\Run' \notin \BWMP(0)\).

Note that both \(\FWMP_\MDP(\WindowLength, \Threshold)\) and \(\BWMP_\MDP(\Threshold)\) are Boolean prefix-independent objectives.
We omit the subscript \(\MDP\) when it is clear from the context.
As considered in previous works~\cite{CDRR15,BGR19,BDOR20}, the window length $\WindowLength$ is usually small (typically $\WindowLength \leq \abs{\Vertices}$), and therefore we assume that $\WindowLength$ is given in unary (while the edge-payoffs are given in binary).

\section{Problem definition}
\label{sec:bwc-bas-bpt-definitions}
We formally describe the notion of optimising expected \(\Objective\)-value with guarantees.
Given an MDP \(\MDP\), a vertex \(\Vertex\), a guarantee threshold \(\GuaranteeThreshold\), and an expectation threshold \(\ExpectationThreshold\), we consider the following three decision problems for optimising expectation while providing guarantees.
\begin{itemize}
    \item {\bf Beyond worst-case (\(\BWC\)) synthesis~\cite{BFRR17}} (Expectation maximisation with sure guarantee): 
    The problem here is to check if the supremum of \(\ExpectationOutcome{\Strategy}{\MDP, v}{\Objective}\) over all strategies \(\Strategy\) such that \(\Outcomev{v}{\MDP}{\Strategy} \subseteq \{\Objective \ge \GuaranteeThreshold \}\)
    (that is, all outcomes in \(\MDP\) starting from \(v\) that are consistent with \(\Strategy\) have \(\Objective\)-value at least \(\GuaranteeThreshold\))
    is at least \(\ExpectationThreshold\).
    We write this decision problem succinctly as
    \(v \models \BWCwArgs{\GuaranteeThreshold}{\ExpectationThreshold}\) in MDP \(\MDP\) for objective \(\Objective\).
    
    Note that for the \(\Outcomev{v}{\MDP}{\Strategy} \subseteq \{\Objective \ge \GuaranteeThreshold \}\) part, the probabilities are ignored and the environment is considered antagonistic in the sense that every play consistent with strategy \(\Strategy\) needs to satisfy the threshold Boolean constraint \(\{\Objective \ge \GuaranteeThreshold\}\).
    \item {\bf Beyond probability threshold (\(\BP\)) synthesis~\cite{CKK17}} (Expectation maximisation with probabilistic guarantee): Here we are given an additional probabilistic threshold \(p\).
    The problem here is to check if the supremum of \(\ExpectationOutcome{\Strategy}{\MDP, v}{\Objective}\) over all strategies \(\Strategy\) such that \(\Pr{\Strategy}{\MDP, v}{\{\Objective \ge \GuaranteeThreshold\}} \ge p\) is at least \(\ExpectationThreshold\).
    We write this decision problem succinctly as 
    \(v \models \BPwArgs{p}{\GuaranteeThreshold}{\ExpectationThreshold}\) in MDP \(\MDP\) for objective \(\Objective\).
    \item {\bf Beyond almost-sure (\(\BAS\)) synthesis~\cite{CR15}} (Expectation maximisation with almost-sure guarantee): This is also called beyond almost-sure (\(\BAS\)) synthesis~\cite{CR15}.
    The problem here is to check if the supremum of \(\ExpectationOutcome{\Strategy}{\MDP, v}{\Objective}\) over all strategies \(\Strategy\) such that \(\Pr{\Strategy}{\MDP, v}{\{\Objective \ge \GuaranteeThreshold\}} = 1\) is at least \(\ExpectationThreshold\).
    We write this decision problem succinctly as 
    \(v \models \BASwArgs{\GuaranteeThreshold}{\ExpectationThreshold}\) in MDP \(\MDP\) for objective \(\Objective\).
\end{itemize}
For each of these decision problems, we study the case where \(\Objective\) is either \(\ObjectiveFWMPL\) or \(\ObjectiveBWMP\). 
For the \(\BWC\) synthesis problem, if the answer is yes, then for every \(\epsilon >0\), we construct a strategy that achieves an expected \(\Objective\)-value of at least \(\ExpectationThreshold - \epsilon\). 
For the \(\BP\) and the \(\BAS\) synthesis problems,
if the answer is yes, then we construct strategies that achieve the expected \(\Objective\)-value of at least \(\ExpectationThreshold\) as well as the specified guarantees.

For all three synthesis problems for the classical mean-payoff objective and for the window mean-payoff objectives, we can assume without loss of generality that the guarantee threshold \(\GuaranteeThreshold\) is \(0\).
This is because we have \(v \models \BWC(\GuaranteeThreshold, \ExpectationThreshold)\) in an MDP \(\MDP\) if and only if \(v \models \BWC(0, \ExpectationThreshold - \GuaranteeThreshold)\) in a new MDP \(\MDP_{-\GuaranteeThreshold}\) (obtained from \(\MDP\) by subtracting \(\GuaranteeThreshold\) from every edge payoff in \(\MDP\)).
Similarly, we can set \(\GuaranteeThreshold = 0\) for \(\BP\) and \(\BAS\) without loss of generality.

\section{Expected fixed window mean-payoff value with guarantees}%
\label{sec:fwmp}
In this section, we show that the \(\BWC\), \(\BP\), and the \(\BAS\) synthesis problems for \(\ObjectiveFWMPL\) can be solved with no additional complexity than that of either of the special cases: maximising the expectation without any guarantee, or ensuring guarantee surely, almost-surely, or with certain probability while disregarding any expected performance.

\subsection{Sure guarantee}%
\label{sub:bwc-fwmp}
In this section, we give an algorithm to solve the \(\BWC\) synthesis problem for the \(\ObjectiveFWMPL\) objective.
Recall that the expected \(\ObjectiveFWMPL\)-value of a vertex \(v\) is defined as the supremum of the expected \(\ObjectiveFWMPL\)-values \(\ExpectationOutcome{\Strategy}{\MDP, v}{\ObjectiveFWMPL}\) over all strategies \(\Strategy\) of the player.
We show in \Cref{ex:bwc-optimal-strategy-may-not-exist} that in general, an optimal strategy achieving the expected \(\ObjectiveFWMPL\)-value \(\ExpectationThreshold\) while also ensuring the sure guarantee of \(0\) may not exist, but for every \(\epsilon > 0\), an \(\epsilon\)-optimal strategy exists.
We thus show how to construct \(\epsilon\)-optimal strategies for \(\BWC(0, \ExpectationThreshold)\) satisfaction.

\begin{example}%
\label{ex:bwc-optimal-strategy-may-not-exist}
    Consider once again the MDP shown in \Cref{fig:bwc-example}.
    We want to determine if \(v_{2} \models \BWCwArgs{0}{2}\) for the \(\ObjectiveFWMPL\) objective for window length \(\WindowLength = 3\).
    If the token somehow reaches \(v_4\), then from there, the player has a strategy to ensure that the \(\ObjectiveFWMPL\)-value of the outcome is surely \(2\), and thus, the player can ensure that the expected \(\ObjectiveFWMPL\)-value from \(v_4\) is at least \(2\) as well.
    Note that for every successive visit of the token to \(v_{7}\), the player has to alternate between \(v_{6}\) and \(v_{8}\) to ensure that the outcome is \(2\).
    However, starting from \(v_{2}\), the player does not have a strategy to reach \(v_{4}\) surely.
    This is because there is an outcome \((v_2 v_3)^{\omega}\) that does not get the token to reach \(v_{4}\).
    If the token remains in the set \(\{v_0, v_1, v_2\}\), then the \(\ObjectiveFWMPL\)-value that can be surely attained is \(0\).
    However, if the player tries to move the token from \(v_2\) to \(v_3\) some fixed (but large) number \(N\) of times, then the probability of reaching \(v_4\) can be made close to \(1\). 
    If after \(N\) tries, the token does not reach \(v_{4}\), then the player can choose to keep the token in the set \(\{v_{0}, v_{1}, v_{2}\}\) and thus surely get a \(\ObjectiveFWMPL\)-value of  \(0\).
    This gives a strategy that surely ensures that the \(\ObjectiveFWMPL\)-value of the outcome is non-negative and the expected \(\ObjectiveFWMPL\)-value is at least \(2 - \epsilon\) for all \(\epsilon > 0\).
\end{example}

We present an algorithm (\Cref{alg:wmp-sure-guarantee}) that, given a vertex \(\VertexInitial\) in an MDP \(\InputMDP\), and a threshold \(\ExpectationThreshold\), decides if \(\VertexInitial \models \BWCwArgs{0}{\ExpectationThreshold}\).
We give an intuitive description of \Cref{alg:wmp-sure-guarantee} and prove its correctness. 
We also show that it runs in time that is polynomial in the size of the input, and explain how it yields an \(\epsilon\)-optimal strategy for the player.
\begin{algorithm}
\caption{\(\FWMPL\) objective with sure guarantee while maximising expectation}\label{alg:wmp-sure-guarantee}
    \begin{algorithmic}[1]
        \Require MDP \(\InputMDP\), vertex \(\VertexInitial \in V\), window length \(\WindowLength\),  and expectation threshold \(\ExpectationThreshold\)
        \Ensure {\Yes} if and only \(\VertexInitial \models \BWC(0, \ExpectationThreshold)\)
        \State Compute \(\SureWinningSet{\FWMPL}{0}\), the sure winning region in \(\InputMDP\) for objective  \(\ThresholdObjectiveFWMPL{\ge 0}\).%
        \label{alg-line:sure-sure-winning-region-computation}
        \If{\(\VertexInitial \notin \SureWinningSet{\FWMPL}{0}\)}
            \State \Return {\No} 
        \EndIf
        \If{\(\ExpectationThreshold \le 0\)} 
            \State \Return {\Yes}
        \EndIf
        \State Construct \(\PrunedMDP \define \InputMDP \restriction \SureWinningSet{\FWMPL}{0}\), the MDP obtained by restricting \(\InputMDP \) to \(\SureWinningSet{\FWMPL}{0}\).
        \For{\(v \in \VerticesPlayer\) in \(\PrunedMDP\)}
            \State Compute the maximum \(\SureValue{v}\) such that \(v\) belongs to the sure winning region in \(\PrunedMDP\) for objective \(\ThresholdObjectiveFWMPL{\ge \SureValue{v}}\).%
            \label{alg-line:sure-sure-value-computation}
        \EndFor
        \State Construct \(\PrunedMDPWithSelfLoops\) from \(\PrunedMDP\) as follows: \newline
        Change payoffs of all edges to \(-1\). \newline
        For each player vertex \(v\), add a self-loop with payoff \(\SureValue{v}\). 
        \State \Return {\Yes} if and only if \(\ExpectationValue{\PrunedMDPWithSelfLoops}{\VertexInitial}{\ObjectiveMP} \ge \ExpectationThreshold\).\label{alg-line:returnBWC}
    \end{algorithmic}
\end{algorithm}

\paragraph*{Description of \Cref{alg:wmp-sure-guarantee}.}
We first compute the sure winning region, \(\SureWinningSet{\FWMPL}{0}\), of the player for the threshold objective \(\ThresholdObjectiveFWMPL{\ge 0}\) (Line~\ref{alg-line:sure-sure-winning-region-computation}). 
The sure winning region of the player in MDP \(\InputMDP\) is the same as the winning region of the player in the adversarial two-player game \(\Game_{\InputMDP}\) obtained by viewing probabilistic vertices of \(\InputMDP\) as vertices of an adversary.
We compute the winning region of player in \(\Game_{\InputMDP}\) using \cite[Algorithm~1]{CDRR15}.
If the vertex \(\VertexInitial\) does not belong to \(\SureWinningSet{\FWMPL}{0}\), then the sure guarantee cannot be satisfied from \(\VertexInitial\), and we have that \(\VertexInitial \not\models \BWC(0, \ExpectationThreshold)\) for all \(\ExpectationThreshold \in \Rationals\), and the algorithm returns {\No}.

Otherwise, we have that \(\VertexInitial \in \SureWinningSet{\FWMPL}{0}\), and thus, there exists a strategy that ensures the sure satisfaction of \(\ThresholdObjectiveFWMPL{\ge 0}\) from \(\VertexInitial\).
We now check if \(\ExpectationThreshold \le 0\).
This is because the sure satisfaction of \(\ThresholdObjectiveFWMPL{\ge 0}\) from \(\VertexInitial\) implies that the expected \(\ObjectiveFWMPL\)-value of \(\VertexInitial\) is at least \(0\), and in particular, if \(\ExpectationThreshold \le 0\), then it follows that \(\VertexInitial \models \BWC(0, \ExpectationThreshold) \), and the algorithm returns {\Yes}.

Finally, we arrive at the interesting case that is  \(\ExpectationThreshold > 0\).
From every vertex \(v\) in \(\SureWinningSet{\FWMPL}{0}\), 
there exists a sure winning strategy for objective \(\ThresholdObjectiveFWMPL{\ge 0}\), 
and every such sure winning strategy never moves the token out of \(\SureWinningSet{\FWMPL}{0}\).
Thus, we can prune all vertices from \(\InputMDP\) that are not in \(\SureWinningSet{\FWMPL}{0}\) to obtain \(\PrunedMDP\), and 
we have that the sets of sure winning strategies for \(\ThresholdObjectiveFWMPL{\ge 0}\) in \(\InputMDP\) and \(\PrunedMDP\) are the same.
Next, in Line~\ref{alg-line:sure-sure-value-computation}, for each player vertex \(v\) in \(\PrunedMDP\), we compute \(\SureValue{v}\), that is the maximum \(\ObjectiveFWMPL\)-value that the player can surely ensure in \(\PrunedMDP\) starting from \(v\).
The sure \(\ObjectiveFWMPL\)-value of each vertex can be computed in polynomial time using binary search (details appear later).
Using \(\PrunedMDP\) and the sure-values \(\SureValue{v}\), we construct a new MDP \(\PrunedMDPWithSelfLoops\) as follows.
\begin{itemize}
    \item  The set of vertices of \(\PrunedMDPWithSelfLoops\) is the same as the set of vertices as \(\PrunedMDP\).
    \item For each edge in \(\PrunedMDP\), we have the same edge present in \(\PrunedMDPWithSelfLoops\), but with payoff \(-1\). 
    In addition, for each player vertex \(v\) in \(\PrunedMDPWithSelfLoops\), we add a self-loop with payoff \(\SureValue{v}\) to \(v\).
\end{itemize}
We compute the expected \(\ObjectiveMP\)-value of \(\VertexInitial\) in \(\PrunedMDPWithSelfLoops\) using linear programming~\cite{Puterman94}.
The correctness of Line~\ref{alg-line:returnBWC} follows from Lemma~\ref{lem:sure-reduction-to-mean-payoff}.

\begin{lemma}%
\label{lem:sure-reduction-to-mean-payoff}
    For all \(\gamma \ge 0\), we have that \(\VertexInitial \models \BWC(0, \gamma)\) in \(\InputMDP\) if and only if  \(\ExpectationValue{\PrunedMDPWithSelfLoops}{\VertexInitial}{\ObjectiveMP} \ge \gamma\).
\end{lemma}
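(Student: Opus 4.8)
The plan is to prove the two implications separately, after first reinterpreting the right-hand side as an \emph{optimal-stopping} value, which will serve as the bridge. In $\PrunedMDPWithSelfLoops$ every player vertex $v$ carries a self-loop of payoff $\SureValue{v} \ge 0$ (note $\SureValue{v} \ge 0$ since every vertex of $\PrunedMDP$ lies in $\SureWinningSet{\FWMPL}{0}$), while every other edge pays $-1$; since $\ObjectiveMP$ is prefix-independent, any run of nonnegative mean payoff must eventually settle into a single self-loop and its finite prefix of $-1$'s is irrelevant. I would therefore argue that $\ExpectationValue{\PrunedMDPWithSelfLoops}{\VertexInitial}{\ObjectiveMP}$ equals the value $g(\VertexInitial)$ of the stopping problem on $\PrunedMDP$ in which the player navigates the MDP and may, at any visited player vertex $v$, stop and collect reward $\SureValue{v}$ forever. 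Crucially, $g$ ``sees through'' probabilistic vertices and correctly accounts for retrying inside end components, so it is \emph{not} merely $\max_v \SureValue{v}$ over a component.

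For the direction $g(\VertexInitial) \ge \gamma \implies \VertexInitial \models \BWC(0,\gamma)$, I would fix a pure memoryless optimal stopping strategy realising $g$ and, for a parameter $N$, play it in $\InputMDP$ until either it prescribes stopping at some vertex $v$ or $N$ steps elapse; at that moment I switch \emph{once} to a sure-winning strategy for $\ThresholdObjectiveFWMPL{\ge \SureValue{v}}$ (or for $\ThresholdObjectiveFWMPL{\ge 0}$) from the current vertex, which exists in $\PrunedMDP$ by definition of $\SureValue{\cdot}$. Because the guarantee objective is prefix-independent and the switch occurs after a bounded prefix, every outcome surely satisfies $\ObjectiveFWMPL \ge 0$ for every $N$. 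For the expectation, all stopping rewards are nonnegative, so the expected $\ObjectiveFWMPL$-value is at least the expected stopping reward restricted to runs that stop within $N$ steps, which increases to $g(\VertexInitial) \ge \gamma$ as $N \to \infty$ by monotone convergence; hence for each $\epsilon$ some $N$ yields expectation $\ge \gamma - \epsilon$, and the supremum defining $\BWC(0,\gamma)$ is at least $\gamma$.

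For the converse $\VertexInitial \models \BWC(0,\gamma) \implies g(\VertexInitial) \ge \gamma$, I would show $\ExpectationOutcome{\Strategy}{\InputMDP, \VertexInitial}{\ObjectiveFWMPL} \le g(\VertexInitial)$ for every $\Strategy$ with $\Outcomev{\VertexInitial}{\InputMDP}{\Strategy} \subseteq \{\ObjectiveFWMPL \ge 0\}$ (such $\Strategy$ never leaves $\SureWinningSet{\FWMPL}{0}$, so it stays inside $\PrunedMDP$); taking the supremum then gives the claim. The core step is the almost-sure bound $\ObjectiveFWMPL(\Run) \le \max_{u \in \inf(\Run) \cap \VerticesPlayer} \SureValue{u}$. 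Refining \Cref{lem:longrunMEC}, almost every outcome $\Run$ has $\inf(\Run)$ equal to a genuine end component $R$; there the probabilistic vertices have all successors inside $R$, so confining the two-player game to $R$ removes only player moves and the sure value in $R$ is at most $\SureValue{u}$. If some $\lambda$ exceeded every such $\SureValue{u}$, then $\lambda$ would exceed the sure value of every vertex of $R$, and since inside an end component the stochastic choices almost surely realise every finite adversarial pattern infinitely often, a $\lambda$-window would stay open longer than $\WindowLength$ infinitely often, forcing $\ObjectiveFWMPL(\Run) < \lambda$ almost surely. Given this bound, the expectation of $\max_{u \in \inf(\Run)}\SureValue{u}$ is itself achievable as a stopping value in $\PrunedMDPWithSelfLoops$ (mimic $\Strategy$ to reach $R$, then use strong connectivity of $R$ to reach the maximising vertex and take its self-loop), so it is at most $g(\VertexInitial)$; chaining the two inequalities finishes the direction. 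Combining both directions yields equality of the two values, and hence the stated threshold equivalence for every $\gamma \ge 0$.

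The hard part will be exactly this almost-sure upper bound relating the \emph{realised} $\ObjectiveFWMPL$-value of a run to the \emph{sure} window values of its recurrent vertices: individual measure-zero ``lucky'' runs can beat the sure value, so the argument must combine determinacy of the two-player window mean-payoff game on the end component with the recurrence of end components to confine the exceptional runs to a null set. Everything else -- the optimal-stopping reformulation of $g$, the bounded-prefix single switch securing the sure guarantee, and the $N \to \infty$ expectation limit -- is routine once this bridge is in place.
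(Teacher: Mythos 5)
Your proposal is correct and follows essentially the same route as the paper: reduce to the expected mean-payoff value of \(\PrunedMDPWithSelfLoops\) (your optimal-stopping value \(g(\VertexInitial)\) is exactly that quantity), realise it in \(\InputMDP\) by playing the stopping/mean-payoff strategy for \(N\) steps and then switching \emph{once} to a sure-winning window strategy, and for the converse use the long-run MEC decomposition to bound the achievable \(\ObjectiveFWMPL\)-value by the maximum of \(\SureValue{u}\) over the recurrent component. The only substantive difference is that you sketch a justification for the almost-sure upper bound \(\ObjectiveFWMPL(\Run) \le \max_{u \in \inf(\Run)} \SureValue{u}\) via the adversary-simulation argument inside an end component, a step the paper asserts without proof.
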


\begin{proof}
    First, we show for all \(\gamma \ge 0\) that \(\ExpectationValue{\PrunedMDPWithSelfLoops}{\VertexInitial}{\ObjectiveMP} \ge \gamma\) implies \(\VertexInitial \models \BWC(0, \gamma)\) in \(\InputMDP\).
    Let \(\StrategyMP\) be a deterministic memoryless optimal strategy from \(\VertexInitial\) in \(\PrunedMDPWithSelfLoops\) for the expectation of \(\ObjectiveMP\) (the existence of such a strategy follows from~\cite{Puterman94}).
    Each outcome of \(\StrategyMP\) from \(\VertexInitial\) is almost-surely a run of the form \(\VertexInitial \cdot v_1 \cdot v_2 \cdots v_m \cdot u^{\omega}\) for some \(m \ge 0\) for some vertices \(v_1, v_2, \ldots, v_m, u \in \PrunedMDPWithSelfLoops\).
    That is, the run begins from \(\VertexInitial\) and almost-surely eventually reaches a vertex \(u\) from which it always takes the self-loop on \(u\) with edge payoff \(\SureValue{u}\).
    This is because \(\StrategyMP\) is memoryless and the payoff \(\SureValue{u}\) of the self-loop for every vertex \(u\) is non-negative, while all other edges in \(\PrunedMDPWithSelfLoops\) have a negative payoff of \(-1\).
    For an outcome of the strategy \(\StrategyMP\) from \(\VertexInitial\) in \(\PrunedMDPWithSelfLoops\), let \(\{u_1, u_2, \ldots, u_k\}\) denote the set of all the vertices that the token reaches with positive probabilities following which the token takes the self-loop forever.
    Further, for \(1 \le i \le k\), let \(p_{i}\) denote the probability that an outcome of \(\StrategyMP\) reaches \(u_{i}\).
    Since the token almost-surely reaches one of the vertices in \(\{u_1, \ldots, u_{k}\}\), we have that \(\sum_{i=1}^k p_{i} = 1\).
    Moreover, if in an outcome \(\Run\), the token loops on vertex \(u_{i}\) for some \(1 \le i \le k\), then the \(\ObjectiveMP\)-value of \(\Run\) is \(\SureValue{u_{i}}\).
    Thus, the expected \(\ObjectiveMP\)-value of \(\VertexInitial\) in \(\PrunedMDPWithSelfLoops\) is \(\sum_{i=1}^k p_i \cdot \SureValue{u_{i}}\), which is at least \(\gamma\) by hypothesis.
    
    To show that \(\VertexInitial \models \BWC(0, \gamma)\) in \(\InputMDP\), we show a construction, for every \(\epsilon > 0\), of a strategy \(\Strategy^{*}_{\epsilon}\) such that outcomes of this strategy from \(\VertexInitial\) surely satisfy \(\ThresholdObjectiveFWMPL{\ge 0}\) and we also have \(\ExpectationOutcome{\Strategy^{*}_{\epsilon}}{\MDP, \VertexInitial}{\ObjectiveFWMPL} \ge \gamma - \epsilon\). 
    We construct \(\Strategy^{*}_{\epsilon}\) by combining strategies \(\StrategyMP\) and \(\StrategySure^{\FWMPL}\), the optimal strategies for expected \(\ObjectiveMP\)-value in \(\PrunedMDPWithSelfLoops\) and the sure-winning strategy for objective \(\ThresholdObjectiveFWMPL{\ge \SureValue{v}}\) in \(\InputMDP\) respectively.
    From \Cref{lem:optimalreach}, for every \(\epsilon > 0\), we can choose a large enough \(N\) such that the token reaches each \(u_i\) with probability at least \(p_i - \epsilon / (\abs{\Vertices} \cdot W)\), where \(W\) is the maximum edge payoff appearing in \(\InputMDP\).
    Let \(\epsilon' = \epsilon / (\abs{\Vertices} \cdot W)\) for brevity.
    The strategy \(\Strategy_{\epsilon}^{*}\) mimics \(\StrategyMP\) until \(\StrategyMP\) starts looping or until \(N\) steps have passed, whichever comes first.  
    Then, if the token is on some vertex \(v\), then \(\Strategy^{*}_{\epsilon}\) switches to mimicking the sure-winning strategy \(\StrategySure^{\FWMPL}\) for the rest of the run.
    Thus, when this strategy \(\Strategy^{*}_{\epsilon}\) switches its mode, the token is on vertices \(u_{1}, u_{2}, \ldots, u_{k}\) with probabilities at least \(p_{1} - \epsilon', p_{2} - \epsilon', \ldots, p_{k} - \epsilon'\) respectively.
    It follows that during the switch, the probability that the token would be on neither of these vertices, that is, the token would be on a vertex in \(\SureWinningSet{\FWMPL}{0} \setminus \{u_{1}, \cdots, u_{k}\}\) is at most \(k \cdot \epsilon'\).
    Since the sure value of every vertex in \(\SureWinningSet{\FWMPL}{0}\) is non-negative and at most \(W\), we have that the expected \(\ObjectiveFWMPL\)-value of an outcome of \(\Strategy^{*}_{\epsilon}\) is at least \(\sum_{i=1}^{k} \SureValue{u_i} \cdot (p_i - \epsilon') + 0 \cdot (1 - \sum_{i=1}^{k} (p_i - \epsilon'))\ge \gamma - \sum_{i = 1}^{k}\SureValue{u_{i}} \cdot \epsilon' \ge \gamma - k \cdot W \cdot \epsilon' \ge \gamma - \epsilon\).
    The \(\ObjectiveFWMPL\)-value of an outcome of this strategy is surely non-negative since \(\SureValue{v} \ge 0\) for all vertices \(v\) in \(\PrunedMDP\).
    Thus, we have that \(\VertexInitial \models \BWC(0, \gamma)\).
    
    Now, we show the converse, that is, we show for all \(\gamma \ge 0\) that  \(\VertexInitial \models \BWC(0, \gamma)\) in \(\InputMDP\) implies  \(\ExpectationValue{\PrunedMDPWithSelfLoops}{\VertexInitial}{\ObjectiveMP} \ge \gamma\).
    Suppose that from vertex \(\VertexInitial\), the player has an \(\epsilon\)-optimal strategy \(\Strategy^{*}_{\epsilon}\) for \(\BWC(0, \gamma)\) from \(\VertexInitial\) in \(\InputMDP\).
    Using \(\Strategy^{*}_{\epsilon}\), we describe a strategy \(\StrategyMP\) from \(\VertexInitial\) in \(\PrunedMDPWithSelfLoops\) that achieves expected \(\ObjectiveMP\)-value at least \(\gamma\), that is, \(\ExpectationOutcome{\StrategyMP}{\PrunedMDPWithSelfLoops, \VertexInitial}{\ObjectiveMP} \ge \gamma\).
    From \Cref{lem:longrunMEC}, we have that an outcome of \(\Strategy^{*}_{\epsilon}\) almost-surely eventually reaches and stays in a MEC from which it never exits. 
    Suppose that starting from \(\VertexInitial\), an outcome of the strategy \(\Strategy^{*}_{\epsilon}\) ends up in MECs \(\MEC_1, \MEC_2, \ldots, \MEC_k\) with probability \(p_{1}, p_{2}, \ldots, p_{k}\) respectively.
    If in an outcome \(\Run\) of \(\Strategy^{*}_{\epsilon}\), the token reaches the MEC \(\MEC_{i}\) and never leaves, then the \(\ObjectiveFWMPL\)-value of \(\Run\) is at most \(\max \{\SureValue{v} \mid v \in \MEC_{i}\}\), and we denote this by \(\SureValue{\MEC_{i}}\).
    Thus, the expected \(\ObjectiveFWMPL\)-value of an outcome of \(\Strategy^{*}_{\epsilon}\) is at least \(\sum_{i=1}^{k} \SureValue{\MEC_{i}} \cdot p_{i}\), that is, \(\sum_{i=1}^{k} \SureValue{\MEC_{i}} \cdot p_{i} \ge \gamma - \epsilon\).
    The strategy \(\StrategyMP\) mimics \(\Strategy^{*}_{\epsilon}\) to almost-surely reach the same MECs in \(\PrunedMDPWithSelfLoops\) with the same probabilities. 
    Now, in each MEC \(\MEC\) in \(\PrunedMDPWithSelfLoops\), the strategy \(\StrategyMP\) can ensure expected \(\ObjectiveMP\)-value \(\max\{ \SureValue{v} \mid v \in \MEC\} \) by almost-surely reaching the vertex \(v\) in \(\MEC\) with the maximum \(\SureValue{v}\), and then looping on \(v\) forever. 
    Thus, we have that by following this strategy \(\StrategyMP\), the expected \(\ObjectiveMP\)-value of an outcome from \(\VertexInitial\) is at least \(\sum_{i=1}^{k} \SureValue{\MEC_{i}} \cdot p_{i}\), and thus, we have that \(\ExpectationValue{\PrunedMDPWithSelfLoops}{\VertexInitial}{\ObjectiveMP} \ge \gamma-\epsilon\).
    Since this holds for every \(\epsilon\), we have that \(\ExpectationValue{\PrunedMDPWithSelfLoops}{\VertexInitial}{\ObjectiveMP} \ge \gamma\).
\end{proof}

\paragraph*{Memory requirement for \(\epsilon\)-optimal strategies.}
In the proof above, we describe a strategy \(\Strategy^{*}_{\epsilon}\) in \(\InputMDP\) that is an \(\epsilon\)-optimal strategy for \(\BWC(0, \ExpectationThreshold)\) constructed from the optimal strategy \(\StrategyMP\) in \(\PrunedMDPWithSelfLoops\) and the optimal sure strategy \(\StrategySure^{\FWMPL}\) in \(\InputMDP\).
Recall that \(\StrategyMP\) is memoryless, and that in each run in \(\PrunedMDPWithSelfLoops\) consistent with \(\StrategyMP\), the token almost-surely eventually reaches a vertex on which it loops.
Given \(\epsilon > 0\), the \(\epsilon\)-optimal strategy \(\Strategy^{*}_{\epsilon}\) mimics \(\StrategyMP\) until a looping vertex is reached or until \(N\) steps have passed (where \(N\) depends on \(\epsilon\)), after which, \(\Strategy^{*}_{\epsilon}\) switches to mimicking \(\StrategySure^{\FWMPL}\) for the rest of the play.
Since \(\StrategyMP\) is memoryless, and
 \(\StrategySure^{\FWMPL}\) requires at most \(\WindowLength\) memory~\cite{CDRR15},
the memory required by \(\Strategy^{*}_{\epsilon}\) is at most \(\max\{N, \WindowLength\}\).
Thus, deterministic finite memory strategies suffice for \(\BWC\).

As \(N\) increases, the probability that the token reaches a looping vertex in \(N\) steps increases.
Given \(\epsilon > 0\), we want an upper bound on the memory requirements of \(\epsilon\)-optimal strategies, and thus, we want to find the smallest \(N\) such that the token reaches a looping vertex in \(N\) steps with probability at least \(1 - \epsilon\).
\Cref{ex:epsilon-N-bound} shows a conservative MDP in which the value of \(N\) is logarithmic in \(1/\epsilon\). 
The MDP is conservative since in order to reach a looping vertex \(u_m\) from the initial vertex \(u_0\), the ``correct'' transition must be taken at probabilistic vertices \(m\) times in a row.
Any time the token makes a ``wrong'' transition, it is sent back to the initial vertex, causing it to lose all progress towards reaching the looping vertex.

\begin{example}%
\label{ex:epsilon-N-bound}
    Consider an MDP \(\MDP_{m}\) (where \(m\) is a positive integer) as shown in \Cref{fig:N-epsilon-bound}.
    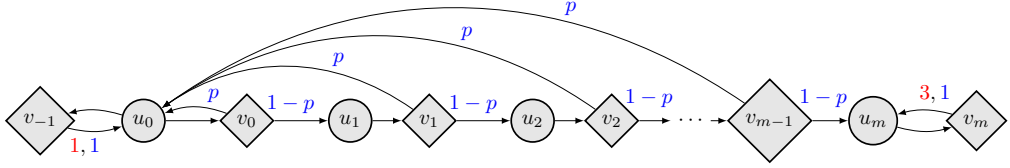
\begin{figure}[t]
        \centering
        \scalebox{0.8}{
            \begin{tikzpicture}
                \node[state] (u0) {\(u_{0}\)};
                \node[random, draw, left of=u0] (vminus1) {\(v_{-1}\)};
                \node[random, draw, right of=u0] (v0) {\(v_{0}\)};
                \node[state, right of=v0] (u1) {\(u_{1}\)};
                \node[random, draw, right of=u1, xshift=-4mm] (v1) {\(v_{1}\)};
                \node[state, right of=v1] (u2) {\(u_{2}\)};
                \node[random, draw, right of=u2, xshift=-4mm] (v2) {\(v_{2}\)};
                \node[right of=v2, xshift=-4mm] (dots) {\(\cdots\)};
                \node[random, draw, right of=dots, xshift=-4mm] (vk) {\(v_{m-1}\)};
                \node[state, right of=vk] (uV) {\(u_{m}\)};
                \node[random, draw, right of=uV] (vV) {\(v_{m}\)};
                \draw 
                      (u0) edge (v0)
                      (v0) edge[bend right=20] node[above, pos=0.2]{\(\color{blue}{p}\)} (u0)
                      (v0) edge node[above, pos=0.3]{\(\color{blue}{1-p}\)} (u1)
                      
                      (u1) edge (v1)
                      (v1) edge[bend right=35] node[above, pos=0.3]{\(\color{blue}{p}\)} (u0)
                      (v1) edge node[above, pos=0.3]{\(\color{blue}{1-p}\)} (u2)
                     
                      (u2) edge (v2)
                      (v2) edge[bend right=35] node[above, pos=0.3]{\(\color{blue}{p}\)} (u0)
                      (v2) edge node[above, pos=0.3, yshift=1mm]{\(\color{blue}{1-p}\)} (dots)
    
                      (dots) edge (vk)
                      
                      (vk) edge node[above, pos=0.3, yshift=1mm]{\(\color{blue}{1-p}\)} (uV)
                      (vk) edge[bend right=35] node[above, pos=0.3]{\(\color{blue}{p}\)} (u0)
                      
                      (uV) edge[bend right=15] (vV)
                      (vV) edge[bend right=15] node[above, pos=0.3]{\(\EdgeValues{3}{1}\)} (uV)
                      
                      (u0) edge[bend right=15] (vminus1)
                      (vminus1) edge[bend right=15] node[below, pos=0.3]{\(\EdgeValues{1}{1}\)} (u0)
                ;
            \end{tikzpicture}
            }
        \caption{The edge \((v_{-1}, u_0)\) has payoff \(1\) and edge \((v_m, u_m)\) has payoff \(3\). Every other edge has payoff \(-1\).}
        \label{fig:N-epsilon-bound}
    \end{figure}
    The MDP \(\MDP_{m}\) consists of \(2m + 3\) vertices, of which \(m+1\) are player vertices \(u_0, u_1, \ldots, u_m \in \VerticesPlayer\) and \(m+2\) are probabilistic vertices \(v_{-1}, v_0, v_1, \ldots, v_m \in \VerticesRandom\). 
    The edges in \(\MDP_{m}\) are as follows:
    \begin{itemize}
        \item 
        The player vertex \(u_0\) has two out-neighbours \(v_{-1}\) and \(v_0\). Both edges \((u_0, v_{-1})\) and \((u_0, v_0)\) have payoff \(-1\).
        \item 
        For each \(1 \le i \le m\), the player vertex \(u_i\) has one out-neighbour \(v_i\), and the edge \((u_i, v_i)\) has payoff \(-1\).
        \item 
        The probabilistic vertex \(v_{-1}\) has one out-neighbour \(u_0\).
        The edge \((v_{-1}, u_0)\) has payoff \(+1\).
        \item 
        For each \(0 \le i \le m - 1\), the probabilistic vertex \(v_i\) has a ``right'' out-neighbour \(u_{i+1}\) and a ``reset'' out-neighbour \(u_{0}\). 
        The ``right'' edge \((v_i, u_{i+1})\) occurs with probability \(1-p\) and the ``reset'' edge \((v_i, u_0)\) occurs with probability \(p\).
        Both edges have payoff \(-1\).
        \item 
        The probabilistic vertex \(v_m\) has one out-neighbour \(u_m\).
        The edge \((v_m, u_m)\) has payoff \(+3\).
    \end{itemize}
    Note that starting from \(u_0\), the token reaches \(u_m\) if and only if the player moves the token to \(v_0\) and the token then goes ``right'' at probabilistic vertices \(m\) times in a row, which happens with probability \((1-p)^m\) (which is a constant positive number).
    Thus, if the player always moves the token to \(v_0\) each time the token is on \(u_0\),  then with probability \(1\) the token eventually reaches \(u_m\) and then cycles between \(u_m\) and \(v_m\) forever. 
    Hence, with probability \(1\), the \(\ObjectiveFWMPL\)-value of an outcome of the ``always \(u_0 \to v_0\)'' strategy is \((-1 + 3)/2 = 1\) for all \(\WindowLength \ge 2\).
    However, this strategy does not ensure that the token surely eventually reaches \(u_m\).
    Indeed, a run consistent with this strategy is one that cycles between \(u_0\) and \(v_0\) forever, and this run has \(\ObjectiveFWMPL\)-value of \(-1\) for \(\WindowLength \ge 1\).
    Thus, this strategy is not optimal for \(\BWCwArgs{0}{\ExpectationThreshold}\) for any threshold \(\ExpectationThreshold\).
    On the other hand, if the player always moves the token from \(u_1\) to \(v_{-1}\), then the token loops between \(u_1\) and \(v_{-1}\) forever, and the sure \(\ObjectiveFWMPL\)-value of an outcome of this strategy is \((-1 + 1)/2 = 0\) for all \(\WindowLength \ge 2\). 
    This is an optimal strategy for \(\BWCwArgs{0}{0}\). 

    With the help of memory, the player can achieve a better \(\ObjectiveFWMPL\)-value in expectation while also guaranteeing a \(\ObjectiveFWMPL\)-value of \(0\) surely.
    Using the \(\epsilon\)-optimal strategy described in the proof of \Cref{lem:sure-reduction-to-mean-payoff}, there exists an \(\epsilon\)-optimal strategy for \(\BWCwArgs{0}{1}\).
    The strategy is the following:
    Let \(N\) be a positive integer.
    As long as \(N\) steps of the player have not elapsed since the beginning of the game, whenever the token reaches \(u_0\), take the \((u_0, v_0)\) edge.
    Otherwise, if at least \(N\) steps of the player have elapsed, then whenever the token reaches \(u_0\), always take the \((u_0, v_{-1})\) edge.
    Every outcome of this strategy either loops between \(u_m\) and \(v_m\) (if the token reaches \(u_m\) in \(N\) steps) or loops between \(u_0\) and \(v_{-1}\) (if the token does not reach \(u_m\) in \(N\) steps).
    Thus, the sure \(\ObjectiveFWMPL\)-value of an outcome of this strategy is at least \(0\), and moreover, as \(N\) grows, the probability that the token reaches \(u_m\) in \(N\) steps approaches \(1\), and therefore, the expected \(\ObjectiveFWMPL\)-value of an outcome of this strategy also approaches \(1\) for all \(\WindowLength \ge 2\).

    The size of the memory of such an \(\epsilon\)-optimal strategy is \(\max\{N, \WindowLength\}\).
    Thus, for a given \(\epsilon\), we want to find the smallest \(N\) such that the probability of reaching \(u_{m}\) from \(u_0\) in at most \(N\) steps is at least \(1-\epsilon\).
    The problem reduces to the following: 
    Let a coin show heads with probability \(p\) and tails with probability \(1-p\). Then, find the smallest positive integer \(N\) such that the probability of seeing at least \(m\) consecutive tails in \(N\) coin tosses is at least \(1-\epsilon\). 
    Equivalently, find the smallest positive integer \(N\) such that the probability of ``no \(m\) consecutive tails in \(N\) tosses'' is at most \(\epsilon\).
    Let \(T_N\) denote this probability of ``no \(m\) consecutive tails in \(N\) tosses''.

    If \(N < m\), then with probability \(1\) there will be no sequence of \(m\) consecutive tails in \(N\) coin tosses. 
    Thus, we have that \(T_0 = T_1 = \dots = T_{m-1} = 1\).
    Otherwise, if \(N \ge m\), then in order for there to be no sequence of \(m\) consecutive tails, there must be at least \(1\) head in the first \(m\) tosses.
    We can characterise sequences of \(N\) coin tosses not containing \(m\) consecutive tails based on the first time we see a head.
    For \(1 \le i \le m\), the probability that the first head appears on the \(i^{\text{th}}\) toss, and then no sequence of \(m\) consecutive tails appears in the remaining \(N - i\) tosses, is \((1-p)^{i-1} \cdot p \cdot T_{N - i}\).
    Thus, the following homogeneous linear recurrence relation solves for \(T_N\).
    \[T_N = p \cdot T_{N-1}+(1-p) \cdot p \cdot T_{N-2} + \dots + (1-p)^{m-1} \cdot p \cdot T_{N-m}, \quad T_0 = T_1 = \dots = T_{m-1} = 1\]
    Solving this recurrence relation gives \(T_N = a_1 c_1^N + a_2 c_2^N + \dots + a_m c_m^N\) for \(c_i < 1\) and constants \(a_i\). 
    Thus, the smallest \(N\) such that \(T_{N} \le \varepsilon\) is logarithmic in \(1/\epsilon\). 
    \qed
\end{example}

\paragraph*{Running time analysis.}
Given a vertex \(v\) in an MDP \(\InputMDP\) and a threshold \(\GuaranteeThreshold\),
the problem of determining if the player has a strategy to surely satisfy the objective \(\ThresholdObjectiveFWMPL{\ge \GuaranteeThreshold}\) from \(v\) is in polynomial time~\cite{CDRR15}.
Thus, the set \(\SureWinningSet{\FWMPL}{0}\) of all such vertices can be computed in polynomial time. 

Computing the sure value \(\SureValue{v}\) for the \(\ObjectiveFWMPL\) for each vertex can be done in polynomial time by using binary search.
Since \(\SureValue{v}\) is non-negative, recall that we can write \(\SureValue{v}\) as a fraction \(a/b\), where \(b \in \{1, \ldots, \WindowLength\}\) and \(a \in \{0, 1, \ldots, W \cdot \WindowLength\} \).
Thus, there are at most \(W \cdot \WindowLength^{2}\) different values that \(\SureValue{v}\) can take.
This value can be found using binary search, where for each possible value \(\GuaranteeThreshold\), we check if \(v\) belongs to the sure winning region in \(\PrunedMDP\) for the threshold objective \(\ThresholdObjectiveFWMPL{\ge \GuaranteeThreshold}\).
Since \(\SureValue{v}\) takes at most \(W \cdot \WindowLength^{2}\) different values, it takes at most \(\log (W \cdot \WindowLength^{2})\) checks to arrive at \(\SureValue{v}\).
Since \(W\) is given in binary and \(\WindowLength\) in unary, we have that \(\SureValue{v}\) can be computed in time that is polynomial in the size of the input.

Expectation of \(\ObjectiveMP\) objective can also be solved in polynomial time using linear programming~\cite{Puterman94}.
Thus, \Cref{alg:wmp-sure-guarantee} runs in polynomial time. 
We summarise the results in the following theorem.
\begin{theorem}%
\label{thm:fwmp-sure-result}
    The \(\BWC\) synthesis problem for the \(\ObjectiveFWMPL\) objective is in \(\PTime\), and if \(v \models \BWC(\GuaranteeThreshold, \ExpectationThreshold)\), then for every \(\epsilon > 0\), there exists an \(\epsilon\)-optimal finite-memory deterministic strategy from \(v\).
\end{theorem}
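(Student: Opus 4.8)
The plan is to assemble the theorem from the machinery already in place, namely \Cref{alg:wmp-sure-guarantee}, its correctness via \Cref{lem:sure-reduction-to-mean-payoff}, and the accompanying running-time and memory analyses. First I would dispose of the general guarantee threshold: since \(v \models \BWC(\GuaranteeThreshold, \ExpectationThreshold)\) in \(\MDP\) if and only if \(v \models \BWC(0, \ExpectationThreshold - \GuaranteeThreshold)\) in the shifted MDP \(\MDP_{-\GuaranteeThreshold}\) (obtained by subtracting \(\GuaranteeThreshold\) from every edge payoff), it suffices to treat the case \(\GuaranteeThreshold = 0\). The shift is computable in polynomial time and does not affect the input size asymptotically, so any polynomial-time procedure for \(\GuaranteeThreshold = 0\) yields one for arbitrary \(\GuaranteeThreshold \in \Rationals\), and likewise any strategy constructed for the shifted instance transfers back.

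For correctness together with the \(\PTime\) bound, I would argue that \Cref{alg:wmp-sure-guarantee} decides \(\VertexInitial \models \BWC(0, \ExpectationThreshold)\) and runs in polynomial time. The cases \(\VertexInitial \notin \SureWinningSet{\FWMPL}{0}\) and \(\ExpectationThreshold \le 0\) are immediate from the definitions, while the remaining case \(\ExpectationThreshold > 0\) is exactly \Cref{lem:sure-reduction-to-mean-payoff}, which equates \(\VertexInitial \models \BWC(0, \ExpectationThreshold)\) with \(\ExpectationValue{\PrunedMDPWithSelfLoops}{\VertexInitial}{\ObjectiveMP} \ge \ExpectationThreshold\). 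Each step is polynomial: the sure winning region \(\SureWinningSet{\FWMPL}{0}\) is computed through the two-player-game algorithm of~\cite{CDRR15}; each sure value \(\SureValue{v}\) is found by binary search over its at most \(W \cdot \WindowLength^{2}\) candidate rationals, which is polynomial because \(W\) is given in binary and \(\WindowLength\) in unary; and the expected \(\ObjectiveMP\)-value in \(\PrunedMDPWithSelfLoops\) is computed by linear programming~\cite{Puterman94}. Composing these gives the \(\PTime\) claim.

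For the strategy guarantee I would invoke the \(\epsilon\)-optimal construction from the proof of \Cref{lem:sure-reduction-to-mean-payoff}. Given \(\epsilon > 0\), the strategy \(\Strategy^{*}_{\epsilon}\) mimics the memoryless expectation-optimal strategy \(\StrategyMP\) in \(\PrunedMDPWithSelfLoops\) until either a looping vertex is reached or \(N\) player-steps elapse (with \(N\) depending on \(\epsilon\)), and then switches to the sure-winning strategy \(\StrategySure^{\FWMPL}\), which uses at most \(\WindowLength\) memory~\cite{CDRR15}. Both components are deterministic, so \(\Strategy^{*}_{\epsilon}\) is deterministic, and its memory is at most \(\max\{N, \WindowLength\}\), hence finite. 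I would close by pointing to \Cref{ex:epsilon-N-bound}, which already exhibits the (logarithmic in \(1/\epsilon\)) dependence of \(N\) on \(\epsilon\), and to \Cref{ex:bwc-optimal-strategy-may-not-exist}, which explains why only \(\epsilon\)-optimality, rather than an exactly optimal strategy, can be guaranteed in general.

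The step I expect to carry the real weight is already discharged: it is \Cref{lem:sure-reduction-to-mean-payoff}, whose forward direction must combine the reachability probabilities of \Cref{lem:optimalreach} with the bounded expected loss incurred when the token fails to reach a looping vertex within \(N\) steps, and whose converse relies on the long-run MEC containment of \Cref{lem:longrunMEC}. Relative to that lemma the theorem is essentially a bookkeeping assembly, so the only care needed here is to check that the polynomial bounds compose and that the reduction to \(\GuaranteeThreshold = 0\) faithfully preserves both the decision and the finite-memory deterministic strategy guarantee.
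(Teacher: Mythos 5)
Your proposal is correct and follows essentially the same route as the paper: the theorem is assembled from \Cref{alg:wmp-sure-guarantee}, whose correctness in the nontrivial case is exactly \Cref{lem:sure-reduction-to-mean-payoff}, together with the paper's running-time analysis (polynomial computation of \(\SureWinningSet{\FWMPL}{0}\), binary search for the sure values, and linear programming for the expected \(\ObjectiveMP\)-value) and the memory bound \(\max\{N, \WindowLength\}\) for the deterministic \(\epsilon\)-optimal strategy \(\Strategy^{*}_{\epsilon}\). The reduction to \(\GuaranteeThreshold = 0\) by shifting edge payoffs is likewise the paper's own normalisation, so no gaps remain.
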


\subsection{Probabilistic guarantee}%
\label{sec:bp-fwmp}
Next, we look at the \(\BP((p, \GuaranteeThreshold), \ExpectationThreshold)\) problem, which generalises the \(\BAS\) problem. 
Given an MDP \(\InputMDP\) and a vertex \(\VertexInitial\), we want to determine if there exists a strategy \(\Strategy\) of the player from \(\VertexInitial\) such that the outcome simultaneously satisfies \(\ThresholdObjectiveFWMPL{\ge \GuaranteeThreshold}\) with probability at least \(p\) and \(\ExpectationOutcome{\Strategy}{\InputMDP, \VertexInitial}{\ObjectiveFWMPL} \ge \ExpectationThreshold\).
As before, we assume without loss of generality that \(\GuaranteeThreshold\) is equal to zero.

In contrast to \(\BWC\), in the case of \(\BP\), the problem is interesting even when \(\ExpectationThreshold \le 0\). 
This is because satisfying the threshold objective \(\ThresholdObjectiveFWMPL{\ge 0}\) with probability at least \(p\) does not necessarily imply that the expectation is at least \(\ExpectationThreshold\), even when \(\ExpectationThreshold \le 0\).
Further, unlike in the case of \(\BWC\), we cannot prune away the set of vertices from which the player cannot satisfy \(\ThresholdObjectiveFWMPL{\ge 0}\) with probability at least \(p\).
This is because, in trying to satisfy the expectation threshold, the token may end up visiting a vertex from which the probability of satisfying \(\ThresholdObjectiveFWMPL{\ge 0}\) is less than \(p\). 
Thus, in the case of \(\BP\), in \Cref{alg:wmp-probabilistic-guarantee}, we do not prune the MDP \(\InputMDP\).

\begin{example}%
\label{ex:bp-example}
    In \Cref{fig:bp-example}, we see an MDP \(\MDP\) in which we want to determine if \(v_{3} \models \BPwArgs{0.5}{0}{2}\) for \(\ObjectiveFWMPL\) for window length \(\WindowLength = 2\).
    If the player keeps the token in the MEC consisting of \(\{v_{0}, v_{1}, v_{2}, v_{3}\}\) forever with probability \(1\), then a \(\ObjectiveFWMPL\)-value of \(+1\) (which is non-negative) is ensured with probability \(1\), which satisfies the guarantee threshold.
    However, this gives only \(1\) in expectation, which is not sufficient to satisfy the expectation threshold \(2\).
    On the other hand, if the player moves the token from \(v_{3}\) to \(v_{4}\) with probability \(1\), then the token reaches the MEC \(\{v_5, v_{7}\}\) with probability \(0.6\) and achieves a value of \(-1\), whereas the token reaches the MEC \(\{v_{6}, v_{8}\}\) with probability \(0.4\) which has a value of \(9\).
    Thus, from \(v_4\), the expected value is \(0.6 \cdot (-1) + 0.4 \cdot 9 = 3\) and thus the expectation threshold is satisfied. 
    However, the token achieves non-negative value with probability only \(0.4\), and the guarantee threshold is not satisfied with sufficient probability.
    On the other hand, if the player chooses a randomised strategy from \(v_3\), that is, if she chooses to stay inside \(\{v_{0}, v_{1}, v_{2}, v_{3} \}\) with probability \(\frac{1}{4}\) and to go to \(v_4\) with probability \(\frac{3}{4}\), then not only is the guarantee threshold satisfied with probability \(\frac{1}{4} + \frac{3}{4} \cdot 0.4 = 0.55\) (which is more than \(0.5\)), but the expectation threshold is also \(\frac{1}{4} \cdot 1 + \frac{3}{4} \cdot \left(0.6 \cdot (-1) + 0.4 \cdot 9 \right) = 2.5\) (which is more than \(2\)).
    In fact, we can show that if the randomised strategy is to stay in \(\{v_0, v_1, v_2, v_3\}\) with probability \(q\) and to eventually go to \(v_4\) with probability \(1 - q\), then this strategy satisfies \(\BP\) for \(\frac{1}{6} \le q \le \frac{1}{3}\).
    This shows that deterministic strategies are not sufficient for the \(\BP\) synthesis problem and that randomised strategies are strictly more powerful.
    Moreover, in order to always stay in the MEC \(\{v_0, v_1, v_2, v_3\}\) with probability \(\frac{1}{4}\) (which is strictly between \(0\) and \(1\)), the strategy needs memory. 
    
    \begin{figure}[t]
        \centering
        \begin{tikzpicture}
            \node[state] (v0) {\(v_{0}\)};
            \node[random, draw, above right of=v0] (v1) {\(v_{1}\)};
            \node[random, draw, below right of=v0] (v2) {\(v_{2}\)};
            \node[state, below right of=v1] (v3) {\(v_{3}\)};
            \node[random, draw, right of=v3] (v4) {\(v_{4}\)};
            \node[state, above right of=v4] (v5) {\(v_{5}\)};
            \node[state, below right of=v4] (v6) {\(v_{6}\)};
            \node[random, draw, right of=v5] (v7) {\(v_{7}\)};
            \node[random, draw, right of=v6] (v8) {\(v_{8}\)};
            \draw 
                  (v0) edge[bend right=15] node[below right, pos=0.3, yshift=+1mm, xshift=-1mm]{\(+1\)} (v1)
                  (v1) edge[bend right=15] node[above left, pos=0.3]{\(+1, \color{blue}{.3}\)} (v0)
                  
                  (v0) edge[bend left=15] node[above right, pos=0.3, yshift=-1mm, xshift=-1mm]{\(+1\)} (v2)
                  (v2) edge[bend left=15] node[below left, pos=0.3]{\(+1, \color{blue}{.1}\)} (v0)
                  
                  (v1) edge[bend left=15] node[right, pos=0.3]{\(+1, \color{blue}{.7}\)} (v3)
                  (v3) edge[bend left=15] node[below left, pos=0.3, yshift=+1mm, xshift=+1mm]{\(+1\)} (v1)
                  
                  (v2) edge[bend right=15] node[below right, pos=0.3]{\(+1, \color{blue}{0.9}\)} (v3)
                  (v3) edge[bend right=15] node[above left, pos=0.3, yshift=-1mm, xshift=+1mm]{\(+1\)} (v2)
                  
                  (v3) edge node[above, pos=0.3]{\(+3\)} (v4)
                 
                  (v4) edge node[above left, pos=0.5]{\(0, \color{blue}{0.6}\)} (v5)
                  (v4) edge node[below left, pos=0.5]{\(0, \color{blue}{0.4}\)} (v6)

                  (v5) edge[bend left=15] node[above, pos=0.3]{\(-2\)} (v7)
                  (v7) edge[bend left=15] node[below, pos=0.3]{\(0, \color{blue}{1}\)} (v5)

                  (v6) edge[bend left=15] node[above, pos=0.3]{\(+10\)} (v8)
                  (v8) edge[bend left=15] node[below, pos=0.3]{\(+8, \color{blue}{1}\)} (v6)
                  
                  (v6) edge node[right, pos=0.4]{\(+20, \color{blue}{1}\)} (v7)
            ;
        \end{tikzpicture}
        \caption{An example of an MDP for \(\BPwArgs{0.5}{0}{2}\) with \(\WindowLength = 2\).}
        \label{fig:bp-example}
    \end{figure}
\end{example}

\begin{algorithm}
\caption{Window mean-payoff objective with probabilistic guarantee}\label{alg:wmp-probabilistic-guarantee}
    \begin{algorithmic}[1]
        \Require MDP \(\InputMDP\), vertex \(\VertexInitial \in V\), window length \(\WindowLength\),  probabilistic-guarantee threshold \((p, 0)\), and expectation threshold \(\ExpectationThreshold\)
        \Ensure {\Yes} if and only \(v \models \BP((p,0), \ExpectationThreshold)\)
        \State Compute MEC decomposition \(\MECs\) of \(\InputMDP\).
        \For{\(\MEC \in \MECs\) in \(\InputMDP\)}
            \State Compute the maximum \(\AlmostSureValue{M}\) such that the player almost-surely satisfies the threshold objective \(\ThresholdObjectiveFWMPL{\ge \AlmostSureValue{\MEC}}\) from every vertex \(v\) in the MDP restricted to the MEC \(\MEC\). \label{alg-line:compute_muM}
            \label{alg-line:probabilistic-sure-value-computation}
        \EndFor
        \State Construct \(\PrunedMDP\) from \(\InputMDP\) as follows: 
        \newline Collapse each MEC \(\MEC\) into a player vertex \(v_{\MEC}\), and add to it a self-loop with payoff \(\AlmostSureValue{\MEC}\).
        \State \Return \Yes\ if and only if \(\VertexInitial \models \BP((p, 0), \ExpectationThreshold)\) in \(\PrunedMDP\) for the \(\ObjectiveMP\) objective.%
        \label{alg-line:probabilistic-bp-mp-check}
    \end{algorithmic}
\end{algorithm}

\paragraph*{Description of \Cref{alg:wmp-probabilistic-guarantee}.}
We begin by finding the MEC decomposition \(\MECs\) of \(\InputMDP\).
Then, for every MEC \(\MEC \in \MECs\), we compute the maximum \(\ObjectiveFWMPL\)-value \(\AlmostSureValue{\MEC}\) that can be achieved almost-surely from a vertex in the MEC \(\MEC\).
This is well-defined as all vertices in a MEC have the same value since \(\ObjectiveFWMPL\) is a prefix-independent objective and every vertex in a MEC is almost-surely reachable from every other vertex in the MEC.
We then construct a new MDP \(\PrunedMDP\) by collapsing each MEC \(\MEC\) in \(\InputMDP\) to a vertex \(v_{\MEC}\).
That is, the set of out-edges of \(v_{\MEC}\) in \(\PrunedMDP\) is the union of the sets of out-edges of all the vertices \(v\) that go out from the MEC \(\MEC\) in \(\InputMDP\), and similarly, the set of in-edges of \(v_{\MEC}\) in \(\PrunedMDP\) is the union of the sets of in-edges of all the vertices in the MEC \(\MEC\) from vertices not in \(\MEC\).
We note that in the MDP \(\PrunedMDP\) with the collapsed MECs, each MEC has exactly one player vertex and a loop on it which contains a probabilistic vertex.
Finally, for each collapsed MEC \(\MEC\) in \(\PrunedMDP\), we add a self-loop with payoff \(\AlmostSureValue{\MEC}\).
In this collapsed MDP \(\PrunedMDP\), we solve the \(\BP\) problem for the \(\ObjectiveMP\) objective,
that is, classical mean-payoff objective,
using linear programming as in~\cite{CKK17} and briefly describe the details below.
If the vertex \(\VertexInitial\) does not belong to any MEC in \(\InputMDP\), then it appears in the collapsed MDP \(\PrunedMDP\) as well. 
Otherwise, if \(\VertexInitial\) belongs to some MEC \(\MEC_{\textsf{init}}\) in \(\InputMDP\), then for ease of notation, we continue to use \(\VertexInitial\) to represent the player vertex in \(\PrunedMDP\) obtained after collapsing \(\MEC_{\textsf{init}}\).

\begin{lemma}
\label{lem:probabilistic-reduction-to-mean-payoff}
    We have that \(\VertexInitial \models \BP((p, 0), \ExpectationThreshold)\) in \(\InputMDP\) for the \(\ObjectiveFWMPL\) objective if and only if \(\VertexInitial \models \BP((p, 0), \ExpectationThreshold)\) in \(\PrunedMDP\) for the \(\ObjectiveMP\) objective.
\end{lemma}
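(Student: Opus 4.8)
The plan is to prove the equivalence by transferring strategies in both directions, reducing every quantity of interest to the distribution over \emph{which MEC a run eventually settles in}. By \Cref{lem:longrunMEC}, for any strategy $\Strategy$ in $\InputMDP$ and any run, almost surely $\inf(\Run) \subseteq \MEC$ for a single MEC $\MEC$; write $q_{\MEC} = \Pr{\Strategy}{\InputMDP, \VertexInitial}{\inf(\Run) \subseteq \MEC}$, so that $\sum_{\MEC} q_{\MEC} = 1$. Since $\ObjectiveFWMPL$ is prefix-independent, the value of a run is determined by its behaviour inside the MEC it settles in, and the transient prefix (including excursions that leave and re-enter MECs) is irrelevant. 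The whole argument then hinges on a single structural claim: the self-loop payoff $\AlmostSureValue{\MEC}$ placed on the collapsed vertex $v_{\MEC}$ simultaneously captures the best expectation \emph{and} the best $\{\ObjectiveFWMPL \ge 0\}$-probability attainable by any strategy confined to $\MEC$.

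The core lemma, which I expect to be the main obstacle, is the following within-MEC characterisation: for every strategy whose runs stay in $\MEC$, (a) the conditional expected $\ObjectiveFWMPL$-value is at most $\AlmostSureValue{\MEC}$, and (b) the conditional probability of $\{\ObjectiveFWMPL \ge 0\}$ equals $\mathbf{1}[\AlmostSureValue{\MEC} \ge 0]$; moreover both bounds are achieved by the strategy that almost-surely guarantees value $\ge \AlmostSureValue{\MEC}$ inside $\MEC$. Achievability is immediate from the definition of $\AlmostSureValue{\MEC}$. The upper bounds both follow from a $0/1$ law for the threshold objective $\{\ObjectiveFWMPL \ge \gamma\}$ inside an end-component: if this prefix-independent objective is satisfiable with positive probability while staying in $\MEC$, then by strong connectivity one can almost-surely return and re-attempt the witnessing behaviour, so it is in fact satisfiable almost surely, hence $\gamma \le \AlmostSureValue{\MEC}$. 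Consequently, for every $\gamma > \AlmostSureValue{\MEC}$ the probability of $\{\ObjectiveFWMPL \ge \gamma\}$ is $0$, giving $\ObjectiveFWMPL \le \AlmostSureValue{\MEC}$ almost surely and thus (a); taking $\gamma = 0$ when $\AlmostSureValue{\MEC} < 0$ gives (b). This $0/1$ law for $\FWMPL$ in end-components is exactly the structural ingredient underlying the expectation reduction of~\cite{BGR19} (and the satisfaction analysis of~\cite{BDOR20,DGG24}), so I would invoke it from there rather than reprove it; the new observation needed for $\BP$ is that the \emph{same} almost-sure-optimal strategy is optimal for both objectives at once, so there is no trade-off within a single MEC.

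For the forward direction, suppose $\VertexInitial \models \BP((p,0), \ExpectationThreshold)$ in $\InputMDP$ via a strategy $\Strategy$. Decomposing its outcomes by settling MEC and applying the upper bounds of the core lemma gives $\ExpectationThreshold \le \ExpectationOutcome{\Strategy}{\InputMDP, \VertexInitial}{\ObjectiveFWMPL} \le \sum_{\MEC} q_{\MEC} \cdot \AlmostSureValue{\MEC}$ and $p \le \Pr{\Strategy}{\InputMDP, \VertexInitial}{\{\ObjectiveFWMPL \ge 0\}} \le \sum_{\MEC \,:\, \AlmostSureValue{\MEC} \ge 0} q_{\MEC}$. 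It then suffices to realise the same settling distribution $(q_{\MEC})_{\MEC}$ in $\PrunedMDP$: collapsing MECs preserves the transient structure and the exits of each MEC, and from $v_{\MEC}$ the player can either take the self-loop (to settle, yielding $\ObjectiveMP$-value exactly $\AlmostSureValue{\MEC}$) or take any original exit (a.s.\ reachable inside $\MEC$, by \Cref{lem:optimalreach}), so the set of achievable MEC-settling distributions coincides in $\InputMDP$ and $\PrunedMDP$. The resulting strategy attains expectation $\sum_{\MEC} q_{\MEC}\AlmostSureValue{\MEC} \ge \ExpectationThreshold$ and $\{\ObjectiveMP \ge 0\}$-probability $\sum_{\AlmostSureValue{\MEC}\ge 0} q_{\MEC} \ge p$, witnessing $\VertexInitial \models \BP((p,0), \ExpectationThreshold)$ in $\PrunedMDP$ for $\ObjectiveMP$.

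The backward direction is symmetric and uses achievability (equalities) in the core lemma. Here I would first observe that the MECs of $\PrunedMDP$ are exactly the collapsed singletons $\{v_{\MEC}\}$ closed under their added self-loops: any larger end-component in $\PrunedMDP$ would lift to an end-component in $\InputMDP$ strictly containing $\MEC$, contradicting maximality. Given a strategy in $\PrunedMDP$ that is $\BP$-winning for $\ObjectiveMP$, its runs almost surely settle on some $v_{\MEC}$ taking the self-loop with mean payoff $\AlmostSureValue{\MEC}$; I lift it to $\InputMDP$ by mimicking the transient moves, playing the almost-sure-$\AlmostSureValue{\MEC}$ strategy whenever the lifted run would loop at $v_{\MEC}$, and navigating a.s.\ to the appropriate exit vertex of $\MEC$ whenever it would leave. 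By part (a) and (b) of the core lemma this lift attains, conditioned on settling in $\MEC$, value exactly $\AlmostSureValue{\MEC}$ almost surely and $\{\ObjectiveFWMPL \ge 0\}$-probability exactly $\mathbf{1}[\AlmostSureValue{\MEC}\ge 0]$, so both the expectation $\sum_{\MEC} q_{\MEC}\AlmostSureValue{\MEC} \ge \ExpectationThreshold$ and the probability $\sum_{\AlmostSureValue{\MEC}\ge 0} q_{\MEC} \ge p$ are reproduced. This establishes the equivalence in both directions.
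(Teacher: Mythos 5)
Your proof is correct and follows essentially the same route as the paper: decompose outcomes by the MEC in which they eventually settle, use \(\AlmostSureValue{\MEC}\) as the payoff of the collapsed self-loop, and transfer strategies in both directions. You are in fact more explicit than the paper about the key within-MEC ingredient (the \(0\)/\(1\) law for the prefix-independent threshold objective, which the paper delegates to the cited end-component analyses); the only nitpick is that your claim (b) should be phrased as an upper bound valid for arbitrary strategies (attained by the almost-sure-optimal one) rather than an equality for every strategy, which is how you actually use it.
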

\begin{proof}
    We want to show that \(\VertexInitial \models \BP((p, 0), \ExpectationThreshold)\) in \(\InputMDP\) for the \(\ObjectiveFWMPL\) objective if and only if \(\VertexInitial \models \BP((p, 0), \ExpectationThreshold)\) in \(\PrunedMDP\) for the \(\ObjectiveMP\) objective.
    
    First, we show that 
    if \(\VertexInitial \models \BP((p, 0), \ExpectationThreshold)\) in \(\PrunedMDP\) for the \(\ObjectiveMP\) objective, then \(\VertexInitial \models \BP((p, 0), \ExpectationThreshold)\) in \(\InputMDP\) for the \(\ObjectiveFWMPL\) objective.
    An optimal strategy \(\StrategyMP\) for \(\BP((p, 0), \ExpectationThreshold)\) for \(\ObjectiveMP\) is one that tries to reach MECs with different probabilities, and then at some point, starts looping on those MECs. 
    Using \(\StrategyMP\), we construct a strategy \(\StrategyFWMPL\) that is optimal for \(\BP((p, 0), \ExpectationThreshold)\) for \(\ObjectiveFWMPL\) in \(\InputMDP\).
    The strategy \(\StrategyFWMPL\) mimics the strategy \(\StrategyMP\) in \(\InputMDP\) until \(\StrategyMP\) switches to looping. 
    When \(\StrategyMP\) reaches an MEC and switches to looping there, the strategy \(\StrategyFWMPL\) switches to \(\StrategyAlmostSure_{\FWMPL}\), an almost-sure winning strategy for \(\ThresholdObjectiveFWMPL{\ge \AlmostSureValue{\MEC}}\). 
    If the switch happens at a vertex in an MEC \(\MEC\) in \(\InputMDP\), then subsequently, the value of the run is \(\AlmostSureValue{\MEC}\) almost-surely.
    Thus, if threshold \(0\) is achieved with probability \(p\) in \(\PrunedMDP\), then the same threshold is achieved with the same probability in original MDP \(\InputMDP\), and if an expectation value \(\ExpectationThreshold\) is attained in \(\PrunedMDP\), then the same expectation value \(\ExpectationThreshold\) is also achieved in \(\InputMDP\).

    Now, we show the converse, that is 
    if \(\VertexInitial \models \BP((p, 0), \ExpectationThreshold)\) in \(\InputMDP\) for the \(\ObjectiveFWMPL\) objective, then \(\VertexInitial \models \BP((p, 0), \ExpectationThreshold)\) in \(\PrunedMDP\) for the \(\ObjectiveMP\) objective.
    Given an optimal strategy \(\StrategyFWMPL\) for the \(\BP((p, 0), \ExpectationThreshold)\) of \(\ObjectiveFWMPL\) in \(\InputMDP\), we construct an optimal strategy \(\StrategyMP\) for \(\BP((p, 0), \ExpectationThreshold)\) of \(\ObjectiveMP\) in \(\PrunedMDP\).
    By \Cref{lem:longrunMEC}, we have that playing according to the strategy \(\StrategyFWMPL\), the token eventually moves into an MEC \(\MEC\) from which it never exits.
    The strategy \(\StrategyMP\) mimics \(\StrategyFWMPL\) upto this point, after which \(\StrategyMP\) switches to looping on \(v_{\MEC}\).
    One can see that if the expectation and probability threshold are satisfied in \(\InputMDP\) for \(\ObjectiveFWMPL\), then they are also satisfied in \(\PrunedMDP\) for \(\ObjectiveMP\).
\end{proof}

\paragraph*{Solving the \(\BP\) problem for the \(\ObjectiveMP\) objective.}
We give a brief description of a linear program that can be used to determine if for a vertex \(v\) in \(\PrunedMDP\), if \(v \models \BP((p, 0), \ExpectationThreshold)\) for the \(\ObjectiveMP\) objective.
The linear program is simpler here than in~\cite{CKK17} since each MEC in the collapsed MDP has only one player vertex and one probabilistic vertex.

Recall that an optimal strategy for the \(\BP\) satisfaction of the \(\ObjectiveMP\) objective in the collapsed MDP \(\PrunedMDP\) is as follows: 
The strategy has two modes: first, it reaches MECs in \(\PrunedMDP\) with the appropriate probabilities such that both probability and expectation thresholds are satisfied.
If the token has reached a MEC \(\MEC\) with appropriate probability, then, the strategy switches to looping, that is, the token loops in \(\MEC\) forever. 

We solve the following linear program \(L\).

We have variables \(\LoopingProbability{v}{\LPYes}\), \(\LoopingProbability{v}{\LPNo}\) for all player vertices \(v \in \VerticesPlayer\) and \(\TransientTime{u}\) for all probabilistic vertices \(u \in \VerticesRandom\), all of which we require to be non-negative.
The variables have the following interpretations:
\begin{itemize}
    \item The variable \(\TransientTime{u}\) denotes the expected number of times that the probabilistic vertex \(u\) is visited before the strategy switches to looping.
    \item  The variables \(\LoopingProbability{v}{\LPYes}\) and \(\LoopingProbability{v}{\LPNo}\) together denote the probability that the strategy switches to looping from vertex \(v\).
    In particular, if \(\AlmostSureValue{v}\) is non-negative, then \(\LoopingProbability{v}{\LPYes}\) denotes the probability that the strategy switches to looping at \(v\) and \(\LoopingProbability{v}{\LPNo}\) is equal to zero.
    Otherwise, if \(\AlmostSureValue{v}\) is negative, then \(\LoopingProbability{v}{\LPYes}\) is equal to zero and 
    \(\LoopingProbability{v}{\LPNo}\) denotes the probability that the strategy switches to looping at \(v\).
    Note that \(\LoopingProbability{v}{\LPYes} + \LoopingProbability{v}{\LPNo}\) always gives the probability of switching to looping at \(v\).
\end{itemize}
The following are the equations in \(L\).
\begin{enumerate}
    \item Flow equations to conserve the probabilities of looping at each vertex \(v\).
    \[
        \text{For all } v \in \VerticesPlayer,\
        \mathbf{1}_{\VertexInitial}(v) + \sum_{u \in \VerticesRandom} \TransientTime{u} \cdot \ProbabilityFunction(u)(v) = \LoopingProbability{v}{\LPYes} + \LoopingProbability{v}{\LPNo} + \sum_{u \in \OutNeighbours{v} \intersection \VerticesRandom} \TransientTime{u}
    \]
    \item This constraint denotes that the strategy almost-surely switches to looping mode.
    \[
        \sum_{v \in \VerticesPlayer} (\LoopingProbability{v}{\LPYes} + \LoopingProbability{v}{\LPNo}) = 1
    \]
    \item 
    For each vertex \(v \in \VerticesPlayer\), the quantity \(\LoopingProbability{v}{\LPYes}+ \LoopingProbability{v}{\LPNo}\) denotes the probability of switching to looping at \(v\), and \(\PayoffFunction(v, v)\) denotes the payoff received on looping at \(v\).
    Thus, the following sum is the expected \(\ObjectiveMP\)-value, and the inequality ensures that the expectation threshold is satisfied.
    \[
        \sum_{v \in \VerticesPlayer} (\LoopingProbability{v}{\LPYes} + \LoopingProbability{v}{\LPNo}) \cdot \PayoffFunction(v, v) \ge \ExpectationThreshold
    \]
    \item This equation ensures that \(\LoopingProbability{v}{\LPYes}\) and \(\LoopingProbability{v}{\LPNo}\) satisfy the description given above, that is, if \(\AlmostSureValue{v}\) of a vertex \(v\) is non-negative, then the probability of switching in \(v\) is given by \(\LoopingProbability{v}{\LPYes}\).
    \[
    \text{For all } v \in \VerticesPlayer,\
        \LoopingProbability{v}{\LPYes} \cdot \PayoffFunction(v, v) \ge 0
    \]
    \[
    \text{For all } v \in \VerticesPlayer,\
        \LoopingProbability{v}{\LPNo} \cdot \PayoffFunction(v, v) \le 0
    \]
    \item This constraint ensures that the probability of looping at a vertex \(v\) with non-negative \(\AlmostSureValue{v}\) is at least \(p\).
    \[
        \sum_{v \in \VerticesPlayer} \LoopingProbability{v}{\LPYes} \ge p
    \]
\end{enumerate}

\paragraph*{Running time analysis.}
The MEC decomposition of \(\InputMDP\) can be done in polynomial time~\cite{CH14}.
For each MEC \(\MEC\) in \(\InputMDP\), computing the value of \(\AlmostSureValue{\MEC}\) can also be done in polynomial time using binary search~\cite{BGR19}.
Finally, to check if \(v \models \BP((p, 0),  \ExpectationThreshold)\) in \(\PrunedMDP\) for the \(\ObjectiveMP\), the algorithm solves the linear program \(L\), which can be done in polynomial time since the number of variables and constraints in \(L\) is polynomial in the size of the input~\cite{Khachiyan1980}.

\paragraph*{Memory requirements of optimal strategies.}
\Cref{fig:bp-example} shows that, in general, optimal strategies of the player requires memory and randomization, and thus, deterministic strategies do not suffice. 
However, finite memory suffices.
In particular, before reaching the MECs in \(\InputMDP\), the strategy \(\StrategyFWMPL\) may need memory as well as randomisation as illustrated in \Cref{ex:bp-example}, and the strategy \(\StrategyAlmostSure_{\FWMPL}\) can be a deterministic strategy with memory size \(\WindowLength\)~\cite{DGG24}.

\begin{theorem}%
\label{thm:fwmp-probability-result}
    The \(\BP\) synthesis problem for the \(\ObjectiveFWMPL\) objective is in \(\PTime\), and if \(v \models \BP((p, \GuaranteeThreshold), \ExpectationThreshold)\), then there exists an optimal finite-memory randomised strategy from \(v\).
\end{theorem}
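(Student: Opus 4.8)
The plan is to assemble the theorem from \Cref{lem:probabilistic-reduction-to-mean-payoff}, the correctness and running time of \Cref{alg:wmp-probabilistic-guarantee}, and the linear program \(L\). First I would use the payoff-shifting reduction noted at the end of \Cref{sec:bwc-bas-bpt-definitions} to assume without loss of generality that \(\GuaranteeThreshold = 0\): subtracting \(\GuaranteeThreshold\) from every edge payoff turns \(v \models \BP((p, \GuaranteeThreshold), \ExpectationThreshold)\) into \(v \models \BP((p, 0), \ExpectationThreshold - \GuaranteeThreshold)\) in the shifted MDP, so it suffices to solve \(\BP((p, 0), \ExpectationThreshold)\).

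For \(\PTime\) membership, I would show that each line of \Cref{alg:wmp-probabilistic-guarantee} runs in polynomial time and that the algorithm is correct. Correctness of passing from \(\ObjectiveFWMPL\) in \(\InputMDP\) to \(\ObjectiveMP\) in the collapsed MDP \(\PrunedMDP\) is exactly \Cref{lem:probabilistic-reduction-to-mean-payoff}, so it remains only to certify that \(\BP((p, 0), \ExpectationThreshold)\) for \(\ObjectiveMP\) in \(\PrunedMDP\) is decidable in polynomial time by testing feasibility of the linear program \(L\). The MEC decomposition is polynomial by \cite{CH14}; each almost-sure value \(\AlmostSureValue{\MEC}\) is computed by binary search in polynomial time by \cite{BGR19}; and \(L\) has polynomially many variables and constraints, hence is solvable in polynomial time by \cite{Khachiyan1980}.

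For the strategy, I would read a finite-memory randomised strategy for \(\ObjectiveMP\) in \(\PrunedMDP\) off a feasible solution of \(L\) and then lift it to \(\InputMDP\) via the forward direction of \Cref{lem:probabilistic-reduction-to-mean-payoff}. The lifted strategy \(\StrategyFWMPL\) mimics the reach-then-loop behaviour until the token enters a MEC \(\MEC\), at which point it switches to the almost-sure winning strategy \(\StrategyAlmostSure_{\FWMPL}\) for \(\ThresholdObjectiveFWMPL{\ge \AlmostSureValue{\MEC}}\). The reaching phase needs finite memory and randomisation to hit the target MECs with the prescribed probabilities (as in \Cref{ex:bp-example}), while \(\StrategyAlmostSure_{\FWMPL}\) is deterministic of memory at most \(\WindowLength\) by \cite{DGG24}; composing them yields a finite-memory randomised strategy. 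Since a feasible \(L\) gives a strategy meeting the probability bound \(p\) and attaining expectation at least \(\ExpectationThreshold\), the strategy is genuinely optimal rather than merely \(\epsilon\)-optimal, and \Cref{ex:bp-example} shows that randomisation and memory are both necessary, so the statement cannot be strengthened to deterministic strategies.

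The step I expect to be the main obstacle is proving that \(L\) exactly characterises which probability--expectation pairs are achievable: every feasible point must yield a genuine strategy achieving both thresholds, and conversely every \(\BP\)-strategy must induce a feasible point. The delicate bookkeeping is the split of the looping mass into \(\LoopingProbability{v}{\LPYes}\) and \(\LoopingProbability{v}{\LPNo}\) according to the sign of \(\AlmostSureValue{v}\): all of \(\LoopingProbability{v}{\LPYes} + \LoopingProbability{v}{\LPNo}\) feeds the expectation constraint, whereas only the non-negative-value mass \(\sum_v \LoopingProbability{v}{\LPYes}\) feeds the guarantee constraint \(\sum_v \LoopingProbability{v}{\LPYes} \ge p\). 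Verifying that this split correctly disentangles the probability \(p\) from the expectation \(\ExpectationThreshold\) is what makes the linear program both sound and complete.
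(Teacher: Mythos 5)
Your proposal is correct and follows essentially the same route as the paper: reduce to \(\GuaranteeThreshold = 0\), run \Cref{alg:wmp-probabilistic-guarantee} (MEC decomposition, compute \(\AlmostSureValue{\MEC}\), collapse, solve the linear program \(L\) for \(\ObjectiveMP\)), invoke \Cref{lem:probabilistic-reduction-to-mean-payoff} for correctness, and build the optimal strategy by composing the reach-with-prescribed-probabilities phase with the almost-sure winning strategy of memory at most \(\WindowLength\). You also correctly pinpoint the soundness/completeness of \(L\) (the \(\LoopingProbability{v}{\LPYes}\)/\(\LoopingProbability{v}{\LPNo}\) split) as the one step the paper itself leaves largely to citation rather than detailed argument.
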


\subsection{Almost-sure guarantee}
In this section, we solve the \(\BAS\) synthesis problem, that is, we decide, given an MDP \(\InputMDP\), a vertex \(\VertexInitial\), and an expectation threshold \(\ExpectationThreshold\), if \(\VertexInitial\) satisfies \(\BAS(0, \ExpectationThreshold)\) in \(\InputMDP\).
In the \(\BAS\) synthesis problem, similar to the \(\BWC\) synthesis problem, we are interested in the case when \(\ExpectationThreshold > 0\), as otherwise satisfying \(\ThresholdObjectiveFWMPL{\ge 0}\) almost-surely implies that the expected \(\ObjectiveFWMPL\)-value is also at least \(0\) and hence the expectation threshold is trivially satisfied.

The \(\BASwArgs{0}{\ExpectationThreshold}\) synthesis problem is a special case of the \(\BP((p, 0), \ExpectationThreshold)\) problem, when \(p = 1\).
We get better bounds for the memory size of the optimal strategies for \(\BAS\) as compared to \(\BP\) in general, and moreover, deterministic strategies suffice for \(\BAS\).
Note that \Cref{alg:wmp-probabilistic-guarantee} works for \(\BAS(0, \ExpectationThreshold)\) as well if we let \(p = 1\).
In Line~\ref{alg-line:probabilistic-bp-mp-check} in \Cref{alg:wmp-probabilistic-guarantee}, we need to check if \(\VertexInitial \models \BP((1, 0), \ExpectationThreshold)\) in the collapsed MDP \(\PrunedMDP\) for the \(\ObjectiveMP\) objective.
That is, we need to check if \(\VertexInitial \models \BAS(0, \ExpectationThreshold)\) in \(\PrunedMDP\) for \(\ObjectiveMP\).
Instead of using the linear program defined in \Cref{sec:bp-fwmp} to do the check,  we use a reduction of the \(\BAS\) problem for \(\ObjectiveMP\) to the problem of standard expected \(\ObjectiveMP\)-value that is described in~\cite{CKK17}.
The reduction is as follows: 
We prune all vertices from \(\PrunedMDP\) from which the player cannot almost-surely achieve non-negative \(\ObjectiveMP\)-value to get an MDP \(\PrunedMDPWithSelfLoops\).
By analysing the MECs, we can check in \(\PTime\) if  the player can almost-surely achieve non-negative \(\ObjectiveMP\)-value from a vertex.
If \(\VertexInitial\) is pruned away, then it does not satisfy  \(\BAS(0, \ExpectationThreshold)\) for \(\ObjectiveMP\) in \(\PrunedMDP\).
Otherwise, \(\VertexInitial\) satisfies \(\BAS(0, \ExpectationThreshold)\) in \(\PrunedMDP\) if and only if in the pruned MDP \(\PrunedMDPWithSelfLoops\), the expected \(\ObjectiveMP\)-value of \(\VertexInitial\) in \(\PrunedMDPWithSelfLoops\) is at least \(\ExpectationThreshold\).
The correctness of the algorithm follows from Lemma~\ref{lem:probabilistic-reduction-to-mean-payoff} by setting \(p=1\).

\paragraph*{Memory requirements of optimal strategies.}
Let \(\StrategyMP\) be a memoryless deterministic optimal strategy for expected classical mean-payoff objective \(\ObjectiveMP\) in \(\PrunedMDPWithSelfLoops\).
An optimal strategy \(\Strategy^{*}\) for \(\BAS(0, \ExpectationThreshold)\) for \(\ObjectiveFWMPL\) in \(\InputMDP\) can be constructed by first mimicking \(\StrategyMP\) until \(\StrategyMP\) switches to looping. 
If the token is on a vertex \(v\) when this switch happens in \(\PrunedMDPWithSelfLoops\), then \(\Strategy^{*}\) should switch to mimicking an optimal strategy for the almost-sure satisfaction of the threshold objective \(\ThresholdObjectiveFWMPL{\ge \AlmostSureValue{M}}\) in \(\InputMDP\).

Since there exist deterministic memoryless optimal strategies  for expectation of \(\ObjectiveMP\)~\cite{Puterman94}, and there exist deterministic optimal strategies with memory size at most \(\WindowLength\) for the almost-sure satisfaction of \(\ThresholdObjectiveFWMPL{\ge \AlmostSureValue{M}}\)~\cite{DGG24}, we get that there exist deterministic optimal strategies with memory size at most \(\WindowLength\) for \(\BAS(0, \ExpectationThreshold)\).

\begin{theorem}%
\label{thm:fwmp-almost-sure-result}
    The \(\BAS\) synthesis problem for the \(\ObjectiveFWMPL\) objective is in \(\PTime\), and if \(v \models \BAS(\GuaranteeThreshold, \ExpectationThreshold)\), then there exists an optimal deterministic strategy of memory size at most \(\WindowLength\) from \(v\).
\end{theorem}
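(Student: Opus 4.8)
The plan is to treat $\BASwArgs{0}{\ExpectationThreshold}$ as the instance $p = 1$ of the $\BP$ synthesis problem already solved in \Cref{sec:bp-fwmp}, so that \Cref{alg:wmp-probabilistic-guarantee} and \Cref{lem:probabilistic-reduction-to-mean-payoff} apply verbatim. First I would run the MEC decomposition of $\InputMDP$, compute for each MEC $\MEC$ the maximal almost-surely attainable value $\AlmostSureValue{\MEC}$, and collapse each MEC into a single player vertex carrying a self-loop of payoff $\AlmostSureValue{\MEC}$, obtaining $\PrunedMDP$. By \Cref{lem:probabilistic-reduction-to-mean-payoff} specialised to $p = 1$, we have $\VertexInitial \models \BASwArgs{0}{\ExpectationThreshold}$ for $\ObjectiveFWMPL$ in $\InputMDP$ if and only if $\VertexInitial \models \BASwArgs{0}{\ExpectationThreshold}$ for $\ObjectiveMP$ in $\PrunedMDP$, which isolates a pure mean-payoff almost-sure problem.

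Next I would solve the mean-payoff $\BAS$ problem on $\PrunedMDP$ by the reduction of~\cite{CKK17} rather than the linear program of \Cref{sec:bp-fwmp}, which is both simpler and yields stronger memory guarantees. Analysing the MECs of $\PrunedMDP$, I would prune every vertex from which the player cannot almost surely secure a non-negative $\ObjectiveMP$-value, obtaining $\PrunedMDPWithSelfLoops$; this test is in $\PTime$. If $\VertexInitial$ is pruned, the answer is \No; otherwise the claim is that $\VertexInitial \models \BASwArgs{0}{\ExpectationThreshold}$ in $\PrunedMDP$ precisely when the standard expected $\ObjectiveMP$-value of $\VertexInitial$ in $\PrunedMDPWithSelfLoops$ is at least $\ExpectationThreshold$. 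The correctness of this equivalence is inherited from~\cite{CKK17}, while the polynomial running time follows since the MEC decomposition~\cite{CH14}, the computation of each $\AlmostSureValue{\MEC}$ by binary search~\cite{BGR19}, the pruning, and the expected mean-payoff computation by linear programming~\cite{Puterman94} are all polynomial.

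For the strategy bounds I would compose a memoryless deterministic optimal strategy $\StrategyMP$ for expected $\ObjectiveMP$ in $\PrunedMDPWithSelfLoops$ with a deterministic almost-sure winning strategy for $\ThresholdObjectiveFWMPL{\ge \AlmostSureValue{\MEC}}$ inside the relevant MEC: the composite strategy mimics $\StrategyMP$ until it settles into looping on some vertex, then switches, once and for all, to the almost-sure $\ObjectiveFWMPL$ strategy in the corresponding MEC of $\InputMDP$. The main obstacle is verifying that this single, non-randomised switch preserves both the almost-sure guarantee and the expected value, and that it stays within the promised memory: here I rely on the fact that $\StrategyMP$ is memoryless and that almost-sure satisfaction of a threshold $\ObjectiveFWMPL$ objective admits deterministic strategies of memory at most $\WindowLength$~\cite{DGG24}. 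Since $p = 1$ removes the need to split probability mass across MECs, no randomisation is required (in contrast to \Cref{ex:bp-example}), and the memory of the almost-sure phase dominates, giving an optimal deterministic strategy of memory size at most $\WindowLength$.
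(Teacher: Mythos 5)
Your proposal is correct and follows essentially the same route as the paper: it specialises the \(\BP\) machinery (Algorithm~\ref{alg:wmp-probabilistic-guarantee} and \Cref{lem:probabilistic-reduction-to-mean-payoff}) to \(p=1\), replaces the linear program by the \cite{CKK17} reduction of mean-payoff \(\BAS\) to standard expected mean payoff on a pruned MDP, and obtains the memory bound by composing a memoryless optimal expected mean-payoff strategy with a deterministic almost-sure \(\ObjectiveFWMPL\) strategy of memory at most \(\WindowLength\). No substantive differences from the paper's argument.
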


\section{Expected bounded window mean-payoff value with guarantees}
\label{sec:bwmp}
In this section, we study the expectation maximisation problem with sure, almost-sure, and probabilistic guarantees for the bounded window mean-payoff objective.
The algorithms are similar to those for the fixed window mean-payoff objective described in the previous section.
We only highlight the main differences here.
Note that all our algorithms require solving two-player games with either \(\{\ObjectiveBWMP \ge 0\}\) or \(\{\ObjectiveMP \ge 0\}\) objective.
While two-player games with \(\ThresholdObjectiveFWMPL{\ge 0}\) objective can be solved in \(\PTime\), solving two-player games with \(\{\ObjectiveBWMP \ge 0\}\) objective or \(\{\ObjectiveMP \ge 0 \}\) objective is in \(\NP \cap \coNP\)~\cite{BGR19}\footnote{In a two-player game, the winning region for the \(\{\ObjectiveBWMP \ge 0\}\)~\cite{BGR19} objective may strictly contain the winning region for the \(\BWMP(0)\) objective~\cite{CDRR15} as shown in Proposition~\ref{prop:BWMP}.}.

\paragraph*{Maximising expectation with sure guarantee}
The algorithm here is similar to \Cref{alg:wmp-sure-guarantee}.
As stated above, computing \(\SureWinningSet{\BWMP}{0}\) is in \(\NP \cap \coNP\).
The MDP \(\PrunedMDP \define \InputMDP \restriction \SureWinningSet{\BWMP}{0}\) is the MDP obtained by restricting \(\InputMDP \) to \(\SureWinningSet{\BWMP}{0}\).
For every \(\Vertex \in \VerticesPlayer\), we compute the maximum \(\SureValue{v}\) such that \(v\) belongs to the sure winning region in \(\PrunedMDP\) for the \(\{\ObjectiveBWMP \ge \SureValue{v}\}\) objective.
Recall that for a run \(\Run\), we have \(\ObjectiveBWMP(\Run) = \sup \{\Threshold \in \Reals \suchthat \exists \WindowLength \ge 0 : \Run \in \FWMP(\WindowLength, \Threshold) \} \).
The maximum \(\SureValue{v}\) can be computed by solving the two-player game \(\Game_\MDP\) with the classical mean-payoff objective from \(v\)~\cite{BGR19}\footnote{Two-player games with the \(\BWMP(\Threshold)\) objective are solved by reducing it to two-player games with total payoff~\cite{CDRR15}.}.
This value thus equals the mean payoff of a cycle in \(\Game_\MDP\) and is of the form \(\frac{a}{b}\) where \(a \in \{0, \ldots, W \cdot \abs{\Vertices}\}\) and \(b \in \{1, \ldots \abs{\Vertices}\}\).
Thus, a binary search is done over \(W \cdot |\Vertices|^2\) many values and hence the two-player game is solved polynomially many times.

The strategy \(\Strategy^{*}_{\epsilon}\) for \(\BWC\) synthesis for the \(\BWMP\) objective is similar to \(\FWMPL\) objective with the difference that the strategy for achieving \(\{\ObjectiveBWMP \ge \SureValue{v}\}\) from \(v\) is memoryless~\cite{BGR19}.

\paragraph*{Maximising expectation with probabilistic guarantee.}
The algorithm for \(\BPwArgs{p}{0}{\ExpectationThreshold}\) synthesis for the \(\ObjectiveBWMP\) objective is almost the same as Algorithm~\ref{alg:wmp-probabilistic-guarantee} with the difference that in Line~\ref{alg-line:compute_muM} to compute the maximum \(\AlmostSureValue{M}\), we use \(\{\ObjectiveBWMP \ge \AlmostSureValue{\MEC}\}\) instead of \(\ThresholdObjectiveFWMPL{\ge \AlmostSureValue{\MEC}}\).
The problem of determining if the player almost-surely satisfies the threshold objective \(\{\ObjectiveBWMP \ge \AlmostSureValue{\MEC}\}\) from every vertex \(\Vertex\) in the MDP \(\InputMDP\) restricted to the MEC \(\MEC\) is in \(\NP \cap \coNP\) by reducing it to a polynomial number of calls to two-player classical mean-payoff games~\cite{BGR19}.
Here also an optimal strategy for \(\BPwArgs{p}{0}{\ExpectationThreshold}\) synthesis for the \(\ObjectiveBWMP\) objective may need both memory and randomisation.

\paragraph*{Maximising expectation with almost-sure guarantee.}
Again, the algorithm is similar to the \(\BAS\) synthesis problem for the \(\ObjectiveFWMPL\) objective.
The \(\BAS\) synthesis problem for the \(\ObjectiveBWMP\) objective is in \(\NP \cap \coNP\) since computing \(\AlmostSureValue{\MEC}\) for each vertex is in \(\NP \cap \coNP\).
In contrast to almost-sure satisfaction of \(\{\ObjectiveFWMPL \ge \AlmostSureValue{\MEC}\}\), optimal deterministic memoryless strategies exist for almost-sure satisfaction of \(\{\ObjectiveBWMP \ge \AlmostSureValue{\MEC}\}\)~\cite{BGR19}.
It follows that optimal deterministic memoryless strategies exist for the \(\BAS\) synthesis problem for the \(\ObjectiveBWMP\) objective.

We thus have the following.
\begin{theorem}%
\label{thm:BWMP}
    The \(\BWC\) synthesis problem, the \(\BP\) synthesis problem, and the \(\BAS\) synthesis problem for the \(\ObjectiveBWMP\) objective are in \(\NP \cap \coNP\), and 
    \begin{enumerate} [(i)]
        \item if \(v \models \BWC(\GuaranteeThreshold, \ExpectationThreshold)\), then for every \(\epsilon > 0\), there exists an \(\epsilon\)-optimal finite-memory deterministic strategy from \(v\).
        \item if \(v \models \BP((p, \GuaranteeThreshold), \ExpectationThreshold)\), then there exists an optimal finite-memory randomised strategy from \(v\).
        \item if \(v \models \BAS(\GuaranteeThreshold, \ExpectationThreshold)\), then there exists an optimal deterministic memoryless strategy from \(v\).
    \end{enumerate}
\end{theorem}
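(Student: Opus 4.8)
The plan is to lift the three algorithms and their correctness arguments from \Cref{sec:fwmp} essentially verbatim, isolating the only places where \(\ObjectiveBWMP\) behaves differently from \(\ObjectiveFWMPL\), and to check that these places cost at most \(\NP \cap \coNP\) while the strategy-complexity claims still hold. The structural reason this works is that \(\ObjectiveBWMP\), like \(\ObjectiveFWMPL\), is a prefix-independent quantitative objective, so the two reduction lemmas (\Cref{lem:sure-reduction-to-mean-payoff} and \Cref{lem:probabilistic-reduction-to-mean-payoff}) transfer once the window-value subroutines are swapped out. Concretely, every subroutine that computes a window-mean-payoff quantity—the sure winning region \(\SureWinningSet{\BWMP}{0}\), the per-vertex sure value \(\SureValue{v}\), and the per-MEC almost-sure value \(\AlmostSureValue{\MEC}\)—is now answered by solving a two-player classical mean-payoff game on \(\Game_\MDP\), which is in \(\NP \cap \coNP\); the remaining steps (MEC decomposition, expected \(\ObjectiveMP\) computation, and the linear program \(L\)) stay in \(\PTime\). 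Since polynomially many calls to an \(\NP \cap \coNP\) oracle keep the whole procedure in \(\NP \cap \coNP\), each of the three problems lands in \(\NP \cap \coNP\).

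For part (i), I would re-run \Cref{alg:wmp-sure-guarantee} with \(\{\ObjectiveBWMP \ge 0\}\) in place of \(\ThresholdObjectiveFWMPL{\ge 0}\). The fact to verify is that the sure \(\ObjectiveBWMP\)-value from a vertex \(v\) in \(\PrunedMDP\) coincides with the value of the classical mean-payoff game from \(v\)~\cite{BGR19}; this lets me compute \(\SureValue{v}\) by binary search over the \(O(W \cdot \abs{\Vertices}^2)\) candidate rationals \(a/b\) with \(a \le W \cdot \abs{\Vertices}\) and \(b \le \abs{\Vertices}\). With \(\SureValue{v}\) in hand, the construction of \(\PrunedMDPWithSelfLoops\) and the proof of \Cref{lem:sure-reduction-to-mean-payoff} carry over literally, since that proof only uses prefix-independence and the existence of a sure-winning strategy that never leaves the sure winning region. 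The one genuine change is memory: the sure strategy for \(\{\ObjectiveBWMP \ge \SureValue{v}\}\) is memoryless~\cite{BGR19} rather than requiring \(\WindowLength\) memory, so the composed \(\epsilon\)-optimal strategy still needs only the finite memory \(N\) coming from \Cref{lem:optimalreach}, giving the claimed finite-memory deterministic \(\epsilon\)-optimal strategy.

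For parts (ii) and (iii), I would re-run \Cref{alg:wmp-probabilistic-guarantee} with \(\AlmostSureValue{\MEC}\) now defined via \(\{\ObjectiveBWMP \ge \AlmostSureValue{\MEC}\}\). The reduction \Cref{lem:probabilistic-reduction-to-mean-payoff} again transfers, since it relies only on \Cref{lem:longrunMEC}, prefix-independence, and the fact that inside a MEC the player can almost-surely attain \(\AlmostSureValue{\MEC}\); computing \(\AlmostSureValue{\MEC}\) reduces to polynomially many classical mean-payoff game solves and is thus in \(\NP \cap \coNP\). The \(\BP\) case then solves the same linear program \(L\) over the collapsed MDP for \(\ObjectiveMP\) (in \(\PTime\)), and the example in \Cref{fig:bp-example} still shows that memory and randomisation are needed, yielding the finite-memory randomised optimal strategy. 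The \(\BAS\) case is the specialisation \(p = 1\), where almost-sure satisfaction of \(\{\ObjectiveBWMP \ge \AlmostSureValue{\MEC}\}\) admits deterministic \emph{memoryless} optimal strategies~\cite{BGR19}; composing such a strategy with the memoryless optimal strategy for expected \(\ObjectiveMP\)~\cite{Puterman94} on the pruned collapsed MDP gives the claimed deterministic memoryless optimal strategy.

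The step I expect to be the main obstacle is justifying that the self-loop values \(\SureValue{v}\) and \(\AlmostSureValue{\MEC}\) are the correct surrogates for \(\ObjectiveBWMP\) despite the gap in \Cref{prop:BWMP}: because the converse of that proposition fails, one cannot argue through a single fixed window, so I must instead rely on the identity between the (sure/almost-sure) \(\ObjectiveBWMP\)-value and the classical mean-payoff value, together with the memoryless optimality of mean-payoff strategies, to certify that a run settling into the relevant region actually realises value \(\SureValue{v}\) (resp.\ \(\AlmostSureValue{\MEC}\)). Verifying this identity and the associated strategy transfer—rather than the complexity bookkeeping, which is routine once the subroutines are placed in \(\NP \cap \coNP\)—is where the real content lies.
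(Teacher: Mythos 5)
Your proposal is correct and follows essentially the same route as the paper: re-run the two algorithms from \Cref{sec:fwmp} with the window-value subroutines (\(\SureWinningSet{\BWMP}{0}\), \(\SureValue{v}\), \(\AlmostSureValue{\MEC}\)) replaced by reductions to two-player classical mean-payoff games solved polynomially many times via binary search over the \(O(W\cdot\abs{\Vertices}^2)\) candidate cycle values, observe that \Cref{lem:sure-reduction-to-mean-payoff} and \Cref{lem:probabilistic-reduction-to-mean-payoff} transfer by prefix-independence, and use the memorylessness of sure/almost-sure \(\ObjectiveBWMP\)-threshold strategies from~\cite{BGR19} to obtain the stated strategy complexities. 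Your closing remark about the failure of the converse of \Cref{prop:BWMP} is the same subtlety the paper flags in a footnote, resolved in the same way.
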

Thus the complexities achieved are no more than sure satisfaction of the \(\{\ObjectiveBWMP \ge 0\}\) in a two-player game or expectation maximisation for the \(\ObjectiveBWMP\) objective in an MDP.

\section{Conclusion}%
\label{sec:conc}
Expectation maximisation with guarantees is a natural problem of importance and interest and appears in various real-world contexts.
Further, window mean-payoff objective strengthens classical mean-payoff objective and prevents some undesired behaviours of classical mean payoff from happening.
We have shown that the \(\BWC\), \(\BAS\), and the \(\BP\) synthesis of fixed window mean-payoff objectives for MDPs are in \(\PTime\) while the problems are in \(\NP \cap \coNP\) for the bounded window mean-payoff objective.
We note that the \(\BWC\) synthesis problem for classical mean payoff is already in \(\NP \cap \coNP\)~\cite{BFRR17,CR15}.
Our results establish that these problems can be solved at no additional cost than solving the expectation problem for the window mean-payoff objectives while not providing any guarantee, or solving the window mean-payoff objectives with the guarantees while disregarding any requirement on the expected behaviour.

As part of future work, we would like to extend the notion of beyond worst-case and beyond almost-sure to other finitary objectives.
It would also be interesting to study these problems in the context of stochastic games.

\bibliography{mybib}

\end{document}